\newtheorem{remark}{Remark}
\newtheorem{theorem}{Theorem}
\newtheorem{lemma}{Lemma}
\begin{document}
\title{SWIPT-Enabled Multiple Access Channel: Effects of Decoding Cost and Non-linear EH Model}
    \author{\IEEEauthorblockN{Pouria Nezhadmohammad, Mohsen Abedi, Mohammad Javad Emadi, and Risto Wichman}
    
\thanks{This research work was partly supported by Academy of Finland under the grant 334000.\par
P. Nezhadmohammad and M. J. Emadi are with the Department of Electrical Engineering, Amirkabir University of Technology, Tehran 159163-4311, Iran
(E-mails:\{pourianzh, mj.emadi\}@aut.ac.ir).\par
M. Abedi and R. Wichman are with the Department of Electrical Engineering, Aalto University, Espoo 02150, Finland (E-mails:\{mohsen.abedi, risto.wichman\}@aalto.fi).
}

}

	\maketitle

\maketitle
\begin{abstract}We studied power splitting-based simultaneous wireless information and power transfer (PS-SWIPT) in multiple access channels (MAC), considering the decoding cost and non-linear energy harvesting (EH) constraints at the receiving nodes to study practical limitations of an EH communication system. Under these restrictions, we formulated and analyzed the achievable rate and maximum departure regions in two well-studied scenarios, i.e., a classical PS-SWIPT MAC and a PS-SWIPT MAC with user cooperation. In the classical PS-SWIPT MAC setting, closed-form expressions for the optimal values of the PS factors are derived for two fundamental decoding schemes: simultaneous decoding and successive interference cancellation. In the PS-SWIPT MAC with user cooperation, the joint optimal power allocation for users as well as the optimal  PS factor are derived. This reveals that one decoding scheme outperforms the other in the classical PS-SWIPT MAC, depending on the function type of the decoding cost. Finally, it is shown that the cooperation between users can potentially boost the performance of a PS-SWIPT MAC under decoding cost and non-linear EH constraints. Moreover, effects of the decoding cost functions, non-linear EH model and channel quality between the users are studied, and performance characteristics of the system are discussed.
\end{abstract}
\IEEEpeerreviewmaketitle
\begin{IEEEkeywords}
	Energy harvesting(EH), power splitting-based simultaneous wireless information and power transfer (PS-SWIPT), multiple access channel (MAC), decoding cost, maximum departure region, non-linear EH model.
\end{IEEEkeywords}
\section{Introduction} 

\lettrine{E}{NERGY} harvesting (EH) is a promising technique in the next generation of wireless communication. The benefit it offers by energizing low-power devices using environmental energy resources makes it an emerging alternative technology in green wireless networking. Energy resources are classified into two types; ambient and dedicated \cite{ku2015advances}. The use of ambient resources, e.g., wind, solar, heat, etc. in the conventional EH systems has introduced a vast uncertainty regarding the harvested energy level, since most ambient resources are usable only in specific environments under special conditions. Therefore, to guarantee the quality of service  (QoS) and EH stability, dedicated resources such as radio frequency (RF) signals have emerged \cite{lu2014wireless}.\par

Recent studies in wireless networks demonstrate the ever-increasing demand for high energy efficiency in modern communication systems, especially in fifth-generation (5G) networks and beyond. Resultantly, a growing number of studies are being undertaken in search of efficient ways to remotely charge wireless devices. To this end, RF signals have exhibited a strong potential to not only transmit information but also to simultaneously transfer energy, underpinning the theory of simultaneous wireless information and power transfer (SWIPT) \cite{varshney2008transporting}. 
Specifically, since high data rates exponentially increase the power consumption of wireless devices and degrades the battery lifetime, SWIPT has become an effective technology to tackle the contradiction between the high data rate and energy conservation \cite{liang2019simultaneous}. In this direction, the performance of SWIPT is investigated in the new 5G frequencies \cite{zhai2018simultaneous} and multi-carrier non-orthogonal multiple access (MC-NOMA) networks \cite{tang2020decoupling}. Furthermore, authors in \cite{khalfet2019ultra} characterize information-energy capacity region in SWIPT binary symmetric channels. However, although the SWIPT systems offer unprecedented benefits to modern communication networks, they  suffer from fundamental performance limitations as well. To address these limitations, authors in \cite{khalfet2018simultaneous} analyzed the channel output feedback in a SWIPT-based two-user Gaussian interference channel. Unfortunately, the feedback at most doubles the energy transmission rate at high signal-to-interference-plus-noise ratio (SINR) in a SWIPT-based two-user Gaussian MAC \cite{amor2017feedback}.
 \par

To leverage the SWIPT systems in practice, four major techniques are proposed in the literature: time switching (TS), power splitting (PS), antenna switching (AS), and spatial switching (SS) \cite{krikidis2014simultaneous}. By employing TS, the receiver switches in time between information decoding (ID) and EH mode. In PS, 
    the receiver splits the received signal into two different streams with different power levels, one for EH and the other for ID (see \cite{krikidis2014simultaneous,ding2015application}), whereas the other two schemes are applied to multi-antenna configurations \cite{ding2015application}. 
In addition to these techniques, for the case of multi-user scenarios, authors in  \cite{amor2015simultaneous,amor2017feedback,khalfet2018simultaneous} propose a signal design method performed at the transmitter sides to not only transfer information to the receiver, but also to construct a specific signal for energy harvester, that means a signal splitting is applied at the transmitter. However, in the conventional PS-SWIPT concept, especially the single user communications or the broadcast scenarios, a specific signal design for energy transfer is not considered \cite{krikidis2014simultaneous,mishra2018energy,ma2020power,ding2015application}, i.e., the transmitted signal only conveys the information. Therefore, the destination performs the PS scheme to decode the message while harvesting energy as well.

To compare PS with other practical techniques especially TS, several scenarios are investigated in the literature. Authors in \cite{shen2014wireless} and \cite{kudathanthirige2019max} studied both PS and TS techniques for SWIPT-based multiple-input-single-output (MISO) and massive multiple-input-multiple-output (MIMO) channels, respectively. Furthermore, in a SWIPT-based decode-and-forward (DF) relay channel, \cite{liu2017joint}  investigates PS and AS techniques and  \cite{oshaghi2019throughput} takes into account the temperature constraints. 
Moreover, the problem of power allocation in SWIPT systems is studied with a cooperative NOMA \cite{yang2017impact}, PS NOMA \cite{tang2019joint}, and a two-user MAC \cite{yu2017energy}, whereas  \cite{huang2020power} and \cite{yang2020resource} analyze power allocation in a PS-SWIPT device-to-device underlaid cellular network  carried out from practical and theoretical points of view, respectively. 
 Interestingly, although several studies indicate that PS entails a higher complexity of implementation due to the PS factor optimization, it outperforms TS owing to the logarithmic form of the rate-energy function, especially in high signal-to-noise ratio (SNR) communications and delay constrained cases as the both energy and information are partly wasted in TS \cite{krikidis2014simultaneous,tang2020decoupling}.  
As a result, we concentrated on PS and disregarded  the TS scheme in this paper, since our goal is to enlarge the achievable rate/departure region.
 \par
 
Although a linear EH model has been conventionally considered in most literature, recent investigations exposed a resource allocation mismatch in the practical implementation approaches \cite{shi2017power}.
 To be realistic, EH hardware reveals a non-linear behavior due to physical limitations, such as storage imperfections, non-linear circuits, processing costs, non-ideal energy conversion efficiency, etc. \cite{xu2017simultaneous,varasteh2020capacity}. Accordingly, the realistic efficiency models of EH circuits must be considered which is investigated in \cite{alevizos2018sensitive}.
 Here, the practical SWIPT antennas with non-linear EH model are equipped with electronic devices to convert the received RF signal power to a direct current (DC) through impedance matching and passive filtering \cite{boshkovska2015practical,clerckx2018fundamentals}.
 To show that PS still has superiority over TS, authors in \cite{xiong2017rate} proved that in a SWIPT-based MIMO broadcasting channel with a non-linear EH model, PS achieves a larger average rate-region than TS. Similarly, authors in \cite{kim2019experimental} experimentally showed that at a high SNR level, energy-throughput regions of the PS receiver are considerably larger than the TS receiver. Additionally, authors in  \cite{jiang2019power} remark that the PS receiver architecture in hybrid access point (H-AP) network requires less power to meet the information and energy requirements at the user's end, and resultantly, more PS users can be served than the ones with TS. However, owing to a higher input RF power in TS, the EH efficiency of TS is larger than that of PS users. In practice, the EH operation is observed to be non-linear due to the saturation effect\cite{wang2020performance}. However, this effect may be rectified by placing several EH circuits in parallel, yielding a sufficiently large linear  conversion  region  in  practice;  thus, the EH-conversion  function  can  also  be  modeled with a linear function  \cite{kang2018joint,ma2019generic}. Authors in \cite{liu2016swipt} claim that the RF-EH circuits can only harvest energy when the received power is higher than a specific sensitivity level. As a result, determining an optimal PS factor in PS-SWIPT systems is instrumental in order to maximize the throughput and minimize the energy consumption.
  \par

On the other hand, the required power to decode the incoming data rate has practically a significant impact on the performance of the low power EH systems.
 Generally,  \emph{decoding cost}  at the destination depends on processing algorithms, circuit designs, and architectures which tends to dominate the other sinks of power (e.g. ADC, DAC, encoding, modulation/demodulation, amplification, etc.) for a high data rate communication \cite{marcu200990}. Authors in \cite{grover2011towards} indicated that in VLSI devices, the decoding cost power is lower-bounded by the number of message-passing iterations among processing elements $n$, amount of energy consumed at each iteration $E_{node}$, the number of computational nodes $m$, the channel dispersion, and the desired upper-bound on the decoding error probability $P_e$. 
The impact of  decoding cost on the performance of the conventional EH systems is extensively studied in the literature \cite{arafa2015optimal,abedi2018cooperative,arafa2016energy,arafa2017energy,abedi2018power,mahdavi2013energy}. Decoding cost as a new constraint was recently investigated in point-to-point, MAC, and broadcast channels with offline EH policies \cite{arafa2015optimal}. Furthermore, the decoding cost effect is considered in EH point-to-point channel with an energy helper \cite{abedi2018cooperative}, EH MAC \cite{arafa2015optimal}, EH two-way channels \cite{arafa2017energy}, data cooperation between users \cite{arafa2016energy}, and SWIPT-based two-hop half-duplex DF relay channel \cite{abedi2018power}. Authors in \cite{grover2011towards} proved that by applying VLSI and LDPC codes, the decoding cost appears as an increasing quadratic function of the decoding rate. However, unlike the EH conversion function, many different codes with multiple decoding algorithms for each code yields a variety of decoding cost function.
In all recent publications \cite{arafa2015optimal,abedi2018cooperative,arafa2016energy,arafa2017energy,abedi2018power},  the decoding cost is assumed as an increasing convex function of data rate to make the optimization problems convex and tractable \cite{mahdavi2013energy}.
However, a general non-decreasing decoding cost function in terms of the data rate has not yet been studied in recent works.

A review of the existing literature did not yield any work that investigates the impact of the decoding cost in the PS-SWIPT MAC scenarios under non-linear EH receivers. To capture the practical restrictions of the physical system in this paper, we focus on the problem of PS-SWIPT by taking into account the non-decreasing decoding costs incurred at the non-linear EH nodes. The predominant potential contributions of this paper are twofold. First, we maximize the rate departure region for a two-user classical PS-SWIPT MAC with two different decoding strategies, i.e., simultaneous decoding and successive interference cancellation (SIC). Subsequently, we formulated the maximum departure region boundaries (MDRBs) with power allocation in a PS-SWIPT MAC with user cooperation. Second, we derived closed-form expressions for optimal PS factors and sum rates in the two PS-SWIPT MAC scenarios. Unlike the previous works wherein an increasing convex form of decoding cost is assumed to ease the optimization problems,  we assume a general non-decreasing form of decoding cost function. Resultantly, it is proven that although our problems are non-convex, global solutions can be derived in closed-form  via solving nonlinear system of equations. Finally, the simulation results are presented, which provide invaluable insights into the PS-SWIPT systems with regard to the decoding costs and non-linear EH model.\par
The rest of the paper is structured as follows. Section II
describes Classical PS-SWIPT MAC. Section III presents the PS-SWIPT MAC with user cooperation. Finally, the numerical results and conclusions are included in Sections IV and V, respectively.

\section{Classical PS-SWIPT MAC}
\subsection{System Model}
 We study a two-user fading MAC wherein the destination employs PS technique to harvest energy from the received signals and utilize the harvested energy for decoding the users’ information. As depicted in Fig. 1, users 1 and 2 encode the messages $m_1$ and $m_{2}$, respectively, then the users' signals are transmitted to the destination. To enlarge the achievable rate regions for both of the classical PS-SWIPT MAC and the cooperative one, it is assumed that both users simultaneously transmit their signals. Thus, in what follows, all the achievable rate regions are derived based on the NOMA transmissions. Besides, perfect  channel state information (CSI) is assumed at the nodes to derive performance upper bound of the considered systems. Here, the destination aims to efficiently harvest energy from the received signal in the EH block for the decoding process to maximize the achievable rate region. 

The received signal at the destination is
{{\begin{equation}\label{MacsimY}
		\begin{aligned}
		& Y =  h_1 X_1+h_2 X_2 + Z,
		\end{aligned}
\end{equation} }}where $h_1$ and $h_2$ indicate the channel gains, $X_1$ and $X_2$ are the transmitted symbols \footnote{Since the focus of our work is to consider the decoding costs effects on the performance of the SWIPT MAC with and without cooperating users in presence of non-linear EH destination, we have only considered the PS scheme at the destination without  considering cooperative signal design for the energy transfer \cite{khalfet2018simultaneous,amor2017feedback}. Our approach could be extended to consider coherent energy-signal transmission by the two users, as well.} by the first and the second user with $\mathbb{E}\lbrace{\vert{X_1}\vert}^{2}\rbrace=P_1$ and $\mathbb{E}\lbrace{\vert{X_2}\vert}^{2}\rbrace=P_2$, and $Z \sim \mathcal{CN}(0,N)$ is complex additive white gaussian noise (AWGN) at the destination.
The destination splits the received signal into EH stream $Y_{EH}$ and ID one $Y_{ID}$ as
{{\begin{IEEEeqnarray}{rlu}
			& Y_{EH}=\sqrt{\rho}Y, \hspace{0.1cm}Y_{ID}=\sqrt{1-\rho}Y + Z_p,
\end{IEEEeqnarray} }}where $\rho\in [0,1]$ refers to the unit-less PS factor and $Z_p\sim \mathcal{CN}(0,N_p)$ represents the signal processing noise at the power splitter. So, the harvested power at the destination is given by
{{\begin{IEEEeqnarray}{rlu}
			& P_{EH}=\psi\big(\rho(|h_1|^2 P_1+|h_2|^2 P_2 + N)\big),
\end{IEEEeqnarray}}}wherein $\psi(.)$  models the output DC power of the rectifier in terms of the input RF power $P_{in}$ defined by 
{{\begin{IEEEeqnarray}{rlu}
			& \psi(P_{in})=\psi^{DC}(P_{in})= \dfrac{\Psi(P_{in})-P_{max}^{DC}\theta}{1-\theta}, 
\end{IEEEeqnarray} }}which is a zero-crossing, non-decreasing \cite{mishra2018energy}, and upper-bounded EH conversion function with the logistic function 
{{\begin{IEEEeqnarray}{rlu}\label{rectifier:q1q2}
			& \Psi(P_{in})= \dfrac{P_{max}^{DC}}{1+e^{-q_1(P_{in}-q_2)}}, 
\end{IEEEeqnarray} }}where $P_{max}^{DC}$ refers to the maximum output DC power and $\theta=(1+e^{q_1q_2})^{-1}$ is a parameter to ensure zero input/zero output of the EH block. Practically, the constant values   $q_1$ and $q_2$ depend on the  hardware elements of the rectifier and are determined via curve fitting methods, \cite{xiong2017rate,boshkovska2015practical}.

\begin{remark}\label{remark:psi_linearity}
In case of placing several EH circuits in parallel, the  EH conversion function is modeled with a linear zero-crossing function $\psi(P_{in})=\eta P_{in}$, wherein $\eta\in [0,1]$ refers to the fixed energy conversion efficiency, \cite{kang2018joint,ma2019generic}.
 
\end{remark}

On the other hand, the required decoding power in the ID block is modeled as $\varphi(R)$ which is a non-decreasing function of the achievable rate $R$.
\begin{figure}[!t]
\centering
	\includegraphics[scale=0.65]{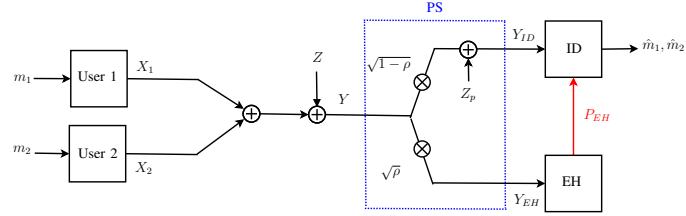}
\caption{ Classical PS-SWIPT MAC.}\label{sysmod2}
\end{figure}

\subsection{Simultaneous Decoding Scheme}
Assume that the destination applies the well-known simultaneous decoding strategy to recover the users' messages \cite{el2011network}. Therefore, the achievable rate region is constrained by
{{\begin{equation}\label{simulr1}    
		\begin{aligned}
		\hspace{-1.8cm} R_1 \leq \frac{1}{2}\log\left(1+\dfrac{(1-\rho)|h_1|^2P_1}{(1-\rho)N+N_p}\right),
		\end{aligned}
		\end{equation}
		\begin{equation}\label{simulr2}
		\begin{aligned}
		\hspace{-1.8cm} R_2 \leq  \frac{1}{2}\log\left(1+\dfrac{(1-\rho)|h_2|^2P_2}{(1-\rho)N+N_p}\right),
		\end{aligned}
		\end{equation}
		\begin{equation}\label{simuldes}
		\begin{aligned}
		\hspace{-1cm} R_1+R_2 \leq \frac{1}{2}\log\left(1+\dfrac{(1-\rho)(|h_1|^2P_1+|h_2|^2P_2)}{(1-\rho)N+N_p}\right),
		\end{aligned}
		\end{equation}
		\begin{equation}\label{simulphi}
		\begin{aligned}
		\hspace{-0.4cm} R_1+R_2 \leq \varphi^{-1}\Big(P_{EH}\Big)=&\varphi^{-1}\Big(\psi\big(\rho(|h_1|^2P_1+|h_2|^2 P_2\\
	    &+ N)\big)\Big),
		\end{aligned}
		\end{equation} }}where (\ref{simulr1}) and (\ref{simulr2}) refer to the achievable rate constraints for the first and the second user, respectively, while (\ref{simuldes}) denotes the sum rate constraint. Also, the decoding cost constraint (\ref{simulphi}) utilizes the fact that $\varphi(.)$ is a non-decreasing function.

\begin{theorem}\label{simultheorem}
 The MDRB is formed by
\begin{equation}
			\begin{aligned}\label{simulregion11}
\hspace{-0.1cm}\begin{cases}

			 R_1^*=\varphi^{-1}\big(\psi(\rho a)\big)-R_2^*, \\
			 		\hspace{5.5cm}	 \rho_1^*\leq\rho\leq \rho_c^*\\ 
	 R_2^*=\frac{1}{2}\log\left(1+\dfrac{(1-\rho)|h_2|^2P_2}{(1-\rho)N+N_p}\right),

\end{cases}
			\end{aligned}
\end{equation} 
\begin{equation}\label{simulregion2}
\begin{aligned}
\hspace{-0.1cm}\begin{cases}
			R_2^*=\varphi^{-1}\big(\psi(\rho a)\big)-R_1^*, \\
			\hspace{5.5cm} \rho_2^*\leq\rho\leq \rho_c^*\\
			  R_1^*=\frac{1}{2}\log\left(1+\dfrac{(1-\rho)|h_1|^2P_1}{(1-\rho)N+N_p}\right),
\end{cases}
\end{aligned}
\end{equation}
and
\begin{equation}\label{simulregion33}
\hspace{-2cm}R_1^*+R_2^*=\frac{1}{2}\log\left(1+\dfrac{(1-\rho_c^{*})(a-N)}{(1-\rho_c^{*})N+N_p}\right),
\end{equation}
where
{{\begin{equation}
			\begin{aligned}\nonumber
			\hspace{-0.9cm}&\rho_c^*=\Gamma_c^{-1}(N_p), \rho_1^*=\Gamma_1^{-1}(N_p),  \rho_2^*=\Gamma_2^{-1}(N_p),
			\end{aligned}
			\end{equation} }}
and
{{\begin{equation}
			\begin{aligned}\nonumber
			\hspace{-0.4cm}&\Gamma_c(x)=2^{2\varphi^{-1}\big(\psi (x a)\big)}\Big((1-x)N+N_p\Big)-(1-x)a,\\
			\hspace{-0.4cm}&\Gamma_1(x)=\Gamma_c(x)+(1-x)|h_1|^2P_1,\\
			\hspace{-0.4cm}&\Gamma_2(x)=\Gamma_c(x)+(1-x)|h_2|^2P_2,\\
			\hspace{-0.4cm}& \hspace{0.65cm}a =|h_1|^2P_1+|h_2|^2P_2+N.
			\end{aligned}
			\end{equation} }}
\end{theorem}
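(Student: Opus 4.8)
The plan is to view the constraint set \eqref{simulr1}--\eqref{simulphi}, for a fixed power‑splitting factor $\rho$, as a Gaussian‑MAC polymatroid intersected with one extra sum‑rate half‑plane, and then to characterise the upper‑right (Pareto) boundary of the union of these sets over $\rho\in[0,1]$. First I would fix $\rho$ and write $\mathcal R(\rho)=\{(R_1,R_2)\ge 0:\ R_1\le C_1(\rho),\ R_2\le C_2(\rho),\ R_1+R_2\le C_{12}(\rho)\}\cap\{R_1+R_2\le D(\rho)\}$, where $C_1,C_2,C_{12}$ are the three right‑hand sides of \eqref{simulr1}--\eqref{simuldes} and $D(\rho):=\varphi^{-1}\bigl(\psi(\rho a)\bigr)$. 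The two monotonicity facts that drive everything are: each $C_i$ is strictly decreasing in $\rho$ (writing $u=1-\rho$, the argument $\tfrac{ug}{uN+N_p}$ of the logarithm has positive derivative $\tfrac{gN_p}{(uN+N_p)^2}$ in $u$, so $C_i$ increases in $u$), whereas $D(\rho)$ is non‑decreasing in $\rho$ because $\psi$ and $\varphi^{-1}$ are non‑decreasing. Hence $D-C_{12}$ is strictly increasing and changes sign at most once; I would identify that root with $\rho_c^*$ through the algebraic equivalence $\ 2^{2D(\rho)}\bigl((1-\rho)N+N_p\bigr)-(1-\rho)a=N_p\iff D(\rho)=C_{12}(\rho)$, i.e. $\rho_c^*$ is the unique solution of $\Gamma_c(\rho)=N_p$. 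Running the same manipulation with $a$ replaced by $a-|h_1|^2P_1=|h_2|^2P_2+N$ gives $D(\rho)=C_2(\rho)\iff\Gamma_1(\rho)=N_p$, and with $a-|h_2|^2P_2$ gives $D(\rho)=C_1(\rho)\iff\Gamma_2(\rho)=N_p$; uniqueness of $\rho_1^*,\rho_2^*$ again follows because $D-C_2$ and $D-C_1$ are strictly increasing.

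Next I would discard the useless splitting factors: for $\rho>\rho_c^*$ one has $D(\rho)>C_{12}(\rho)$, so the binding sum bound is $C_{12}(\rho)\le C_{12}(\rho_c^*)=D(\rho_c^*)$, and since the $C_i$ are decreasing we get $\mathcal R(\rho)\subseteq\mathcal R(\rho_c^*)$; thus only $\rho\in[0,\rho_c^*]$ contributes, and there the sum bound is the non‑decreasing function $D(\rho)$ while $C_1,C_2$ decrease, so the frontier of $\mathcal R:=\bigcup_{\rho\le\rho_c^*}\mathcal R(\rho)$ is genuinely assembled from several values of $\rho$. For the flat piece, $\max_{\mathcal R}(R_1+R_2)=\max_\rho\min\{C_{12}(\rho),D(\rho)\}$, and because the two terms cross at $\rho_c^*$ this equals $C_{12}(\rho_c^*)$, which is exactly \eqref{simulregion33}. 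For the two curved branches I would fix an admissible value $r$ of $R_2$ and maximise $R_1$ over $\rho$ subject to $r\le C_2(\rho)$ and $R_1\le\min\{C_1(\rho),\,D(\rho)-r\}$; since $D(\rho)-r$ is non‑decreasing and $C_1(\rho)$ is strictly decreasing, any maximiser satisfies one of the active conditions (i) $C_2(\rho)=r$, giving $R_2^*=C_2(\rho)$ and $R_1^*=D(\rho)-R_2^*$, i.e. branch \eqref{simulregion11}; (ii) $C_1(\rho)=D(\rho)-r$, giving $R_1^*=C_1(\rho)$ and $R_2^*=D(\rho)-R_1^*$, i.e. branch \eqref{simulregion2} by the symmetric role of the two users; or (iii) the endpoint $\rho=\rho_c^*$, which lands on the flat face. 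The parameter ranges then come out of feasibility: along \eqref{simulregion11}, $R_1^*=D(\rho)-C_2(\rho)\ge 0\iff\Gamma_1(\rho)\ge N_p\iff\rho\ge\rho_1^*$, while $\rho\le\rho_c^*$ forces $R_1^*+R_2^*=D(\rho)\le C_{12}(\rho)\le C_1(\rho)+C_2(\rho)$ so the pair stays inside the pentagon; the range $\rho_2^*\le\rho\le\rho_c^*$ for \eqref{simulregion2} is symmetric. Monotonicity of $D-C_2$ (resp. $D-C_1$) also shows each branch is a decreasing curve in the $(R_1,R_2)$‑plane, hence a genuine Pareto arc rather than an interior one.

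The step I expect to be the main obstacle is precisely this frontier‑assembly argument: proving that the boundary of the \emph{union} $\bigcup_\rho\mathcal R(\rho)$ is exactly the concatenation of \eqref{simulregion11}, \eqref{simulregion33}, \eqref{simulregion2} and contains no dominated or interior segment. It has to be carried out using only the monotonicity of $C_1,C_2,C_{12},D$, because $\psi$ and $\varphi$ are assumed merely non‑decreasing — there is no convexity or differentiability to lean on — and the bookkeeping over which individual constraints are slack versus tight, together with ruling out that the $\rho$‑optimisation terminates at an endpoint ($\rho=0$ or $\rho=\rho_c^*$) instead of an interior crossing point, is where the care is needed. The remaining content is the routine algebra linking $\{D=C_{12},\,D=C_2,\,D=C_1\}$ to $\{\Gamma_c=N_p,\,\Gamma_1=N_p,\,\Gamma_2=N_p\}$, already indicated above.
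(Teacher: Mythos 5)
Your proposal is correct and recovers all three pieces of the theorem, but it is organized differently from the paper's own proof. The paper argues by an $\epsilon$-perturbation exchange argument directly on the tightness pattern of (\ref{simulr1})--(\ref{simulphi}): any rate pair for which both (\ref{simuldes}) and (\ref{simulphi}) are slack (or for which the wrong combination is tight) is excluded from the MDRB because $\rho$ can be perturbed and both rates increased, and the three surviving tightness patterns are precisely (\ref{simulregion11}), (\ref{simulregion2}) and (\ref{simulregion33}); the thresholds $\rho_1^*,\rho_2^*,\rho_c^*$ are then only asserted through the $\Gamma$ functions. You instead fix $\rho$, write the feasible set as the MAC pentagon (right-hand sides $C_1(\rho),C_2(\rho),C_{12}(\rho)$ of (\ref{simulr1})--(\ref{simuldes})) intersected with the half-plane $R_1+R_2\le D(\rho)=\varphi^{-1}\bigl(\psi(\rho a)\bigr)$, and obtain the frontier of the union over $\rho$ by maximizing $R_1$ for each fixed $R_2=r$, using that the pentagon bounds strictly decrease in $\rho$ while $D$ is non-decreasing, so the optimum sits either at a crossing ($D=C_1$ or $D=C_{12}$) or at the endpoint $C_2=r$. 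Both arguments rest on the same monotonicity facts, but yours makes explicit several things the paper leaves implicit: the algebraic equivalences $D=C_{12}\Leftrightarrow\Gamma_c=N_p$, $D=C_2\Leftrightarrow\Gamma_1=N_p$, $D=C_1\Leftrightarrow\Gamma_2=N_p$ (hence well-definedness and uniqueness of $\rho_c^*,\rho_1^*,\rho_2^*$ through strict monotonicity of $D-C_{12}$, $D-C_2$, $D-C_1$), the domination $\mathcal{R}(\rho)\subseteq\mathcal{R}(\rho_c^*)$ for $\rho>\rho_c^*$, and the check $C_{12}\le C_1+C_2$ guaranteeing the branch points remain inside the pentagon. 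The paper's exchange argument is shorter and makes the "no dominated segment" claim transparent case by case, while your construction is more constructive and yields the parameter ranges as an automatic by-product; the edge cases left open are the same in both treatments (existence of the roots in $[0,1]$, the set-valued $\varphi^{-1}$ when $\varphi$ is constant, and the degenerate endpoint $\rho=0$), so your sketch is at essentially the same level of rigor as the published proof.
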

\begin{proof}
To form the MDRB constrained by (\ref{simulr1})-(\ref{simulphi}), we find the pairs of rates ($R_1$, $R_2$) who are infeasible to be added at the same time. Since $\varphi(.)$ and $\psi(.)$ are non-decreasing functions, the right terms of (\ref{simulr1}), (\ref{simulr2}), and (\ref{simuldes}) are decreasing, while the right term of (\ref{simulphi}) is increasing in $\rho$. The pair of rates  satisfying both (\ref{simuldes}) and (\ref{simulphi}) with strict inequality does not fall on the MDRB, as 
we can decrease $\rho$ with small enough $\epsilon>0$ so that all the constraints (\ref{simulr1})-(\ref{simulphi}) are satisfied with strict inequality. Then, we can feasibly increase both $R_1$ and $R_2$ with $\epsilon/M$ for a large enough $M>0$, which contradicts the definition of the MDRB. As a result, we consider three assumptions: 
 I) Assume (\ref{simuldes}) dominates (\ref{simulphi}), i.e., (\ref{simulphi}) is satisfied with strict inequality, while (\ref{simuldes}) is satisfied with equality. To show the contradiction here, we can feasibly  decrease $\rho$ with $\epsilon$ and increase both $R_1$ and $R_2$ with $\epsilon/M$;
 II) Assume (\ref{simulphi}) dominates (\ref{simuldes}).
 With this assumption, we have four cases. First, (\ref{simulr1}) and (\ref{simulr2}) are satisfied with equality. This case is already infeasible as it violates (\ref{simuldes}). Second, (\ref{simulr1}) and (\ref{simulr2}) are satisfied with strict inequality. Here, we can feasibly increase $\rho$, $R_1$, and $R_2$ with $\epsilon$, $\epsilon/M$, and  $\epsilon/M$, respectively, indicating  that this case also does not form the MDRB. Third, (\ref{simulr2}) dominates (\ref{simulr1}). This case, formulated in (\ref{simulregion11}), partly generates the MDRB as any small change in $\rho$ results in a violation of (\ref{simulr2}) or (\ref{simulphi}).
 Forth, (\ref{simulr1}) dominates (\ref{simulr2}). This case formulated in (\ref{simulregion2}), similarly generates another part of the MDRB;
III) Assume both (\ref{simuldes}) and (\ref{simulphi}) are satisfied with equality. This assumption formulated in (\ref{simulregion33}),
also partly creates the MDRB as any change in $\rho$ violates (\ref{simuldes}) or (\ref{simulphi}).
\end{proof}
Now, we characterize the convexity of the maximum departure region, bounded by the MDRB in Theorem \ref{simultheorem}.
\begin{remark} Since $\varphi(.)$ and $\psi(.)$ are both increasing, we derive  $d R_1^*/d R_2^*\leq -1$ on the MDRB part, implying that the maximum sum rate subject to (\ref{simulregion11}) occurs in $\rho_c^*$. Using similar arguments, the maximum sum rate subject to (\ref{simulregion2}) occurs also in $\rho_c^*$.
Moreover, if $\varphi^{-1}\big(\psi(x)\big)$ is a convex function within the interval $a\rho_1^*\leq x \leq a\rho_c^*$, then $d^2R_1^*/ d R_2^{*^2}\leq 0$ and the MDRB part calculated by (\ref{simulregion11}) is convex one. Otherwise, using time sharing technique, the convex hull of (\ref{simulregion11}) yields the maximum departure region. The same argument is verified for (\ref{simulregion2}).
\end{remark}
Depart from the maximum departure region, an important goal in PS-SWIPT MAC is to optimize the PS factor for  achievable sum rate  maximization. To this end, the optimization problem is formulated as
\begin{align}
\hspace{-0.3cm}\mathbf{P} \mathbf{1}: \max _{{\rho,R_1,R_2}} & R_1+R_2 \\
\text { {s.t.} }\hspace{0.2cm}&(\ref{simulr1})- (\ref{simulphi}), \hspace{0.1cm} \text{and} \hspace{0.1cm} 0\leq\rho\leq1, 
\end{align}
which is a non-convex optimization problem whose solution is a point on the MDRB calculated in Theorem \ref{simultheorem}.
\begin{lemma}\label{lm2} 
Optimally in $\mathbf{P} \mathbf{1}$, the whole harvested energy is consumed at the ID block, i.e., (\ref{simulphi}) is satisfied with equality. Moreover, (\ref{simuldes}) is also satisfied with equality.
\end{lemma}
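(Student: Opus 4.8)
I would reduce $\mathbf{P1}$ to a one-dimensional problem in $\rho$ and then locate its maximizer. First I would argue that, for every fixed $\rho\in[0,1]$, the largest sum rate compatible with (\ref{simulr1})--(\ref{simulphi}) equals $\min\{g_1(\rho),g_2(\rho)\}$, where $g_1(\rho)$ denotes the right-hand side of the MAC sum bound (\ref{simuldes}) and $g_2(\rho)=\varphi^{-1}\big(\psi(\rho a)\big)$ denotes the right-hand side of (\ref{simulphi}). The only delicate point is that the per-user bounds (\ref{simulr1})--(\ref{simulr2}) never cut the attainable sum below $\min\{g_1,g_2\}$: with $u$ and $v$ the two SINR-like ratios appearing in (\ref{simulr1}) and (\ref{simulr2}), the elementary inequality $(1+u)(1+v)\ge 1+u+v$ shows that the sum of the right-hand sides of (\ref{simulr1}) and (\ref{simulr2}) is at least $g_1(\rho)\ge\min\{g_1(\rho),g_2(\rho)\}$, so a feasible split $(R_1,R_2)$ realising any target sum up to $\min\{g_1(\rho),g_2(\rho)\}$ always exists. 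Hence $\mathbf{P1}$ is equivalent to $\max_{\rho\in[0,1]}\min\{g_1(\rho),g_2(\rho)\}$.

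Next I would pin down the shapes of the two curves. Differentiating $g_1$ in $t=1-\rho$ shows it is strictly decreasing on $[0,1]$, with $g_1(0)>0$ and $g_1(1)=0$; since $\psi$ is zero-crossing, continuous and non-decreasing and $\varphi^{-1}$ is continuous and non-decreasing, $g_2$ is non-decreasing, with $g_2(0)=\varphi^{-1}(0)=0$ (using $\varphi(0)=0$) and $g_2(1)\ge 0$. Therefore $h(\rho)=g_1(\rho)-g_2(\rho)$ is strictly decreasing with $h(0)>0\ge h(1)$, so it has a unique zero, which is exactly the $\rho_c^*$ of Theorem~\ref{simultheorem}. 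On $[0,\rho_c^*]$ the inner minimum equals $g_2$ and is non-decreasing, while on $[\rho_c^*,1]$ it equals $g_1$ and is strictly decreasing; hence the maximizer is $\rho^*=\rho_c^*$, where $g_1(\rho_c^*)=g_2(\rho_c^*)$.

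Translating back, at an optimum we get $R_1^*+R_2^*=g_1(\rho_c^*)$, the right-hand side of (\ref{simuldes}), so (\ref{simuldes}) is tight; and $R_1^*+R_2^*=g_2(\rho_c^*)=\varphi^{-1}\big(\psi(\rho_c^* a)\big)$, so (\ref{simulphi}) is tight, i.e. all harvested energy is spent on decoding. The tightness of (\ref{simulphi}) alone can alternatively be obtained by the perturbation argument used in Theorem~\ref{simultheorem}: if (\ref{simulphi}) had slack, decreasing $\rho$ slightly keeps (\ref{simulr1})--(\ref{simulr2}) and (\ref{simuldes}) slack and, by continuity, (\ref{simulphi}) slack as well, after which $R_1$ and $R_2$ can both be increased, contradicting optimality. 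I expect the main obstacle to be the bookkeeping in the reduction step — showing cleanly that the per-user constraints never bind the sum — together with the degenerate boundary behaviour of $\varphi^{-1}\circ\psi$ (the convention $\varphi(0)=0$, continuity, and strict monotonicity on the operating range, which is what guarantees a unique maximizer); the monotone min-max part is then routine.
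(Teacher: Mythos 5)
Your proof is correct, but it takes a genuinely different route from the paper. The paper argues locally by perturbation: if (\ref{simulphi}) had slack, decrease $\rho$ slightly (all right-hand sides of (\ref{simulr1})--(\ref{simuldes}) grow, (\ref{simulphi}) stays slack) and add $\epsilon/M$ to both rates; if (\ref{simuldes}) had slack, increase $\rho$ slightly and again add to both rates --- each time contradicting optimality. You instead reduce $\mathbf{P1}$ globally to $\max_{\rho}\min\{g_1(\rho),g_2(\rho)\}$, justify that the per-user caps never bind the sum via $(1+u)(1+v)\ge 1+u+v$ (so any sum up to $\min\{g_1,g_2\}$ admits a feasible split), and then exploit the monotonicity of $g_1$ (strictly decreasing, positive to $0$) and $g_2$ (non-decreasing from $0$) to place the maximizer at the unique crossing point, which you correctly identify with $\rho_c^*$ of Theorem~\ref{simultheorem}. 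Your global argument buys two things: it cleanly disposes of the case where a per-user constraint is tight while (\ref{simuldes}) is slack --- a case the paper's second perturbation (``increase $\rho$, letting $R_1$ and $R_2$ increase'') glosses over, since increasing $\rho$ shrinks the right-hand sides of (\ref{simulr1})--(\ref{simulr2}) --- and it essentially proves Theorem~\ref{simultaneous_sumrate_theorem} (existence, uniqueness and value of $\tilde\rho$) as a by-product. The paper's perturbation argument is shorter and does not need the explicit splitting step. Both arguments rest on the same implicit hypotheses, which you rightly flag: $\varphi(0)=0$ together with $\psi(0)=0$ (so $g_2(0)=0$), and strict monotonicity of $\varphi^{-1}\circ\psi$ on the operating range; without strictness (e.g.\ a saturated rectifier, or the constant-cost model used later in the numerics) the maximizer of $\min\{g_1,g_2\}$ need not be unique and (\ref{simuldes}) can be slack at some optima, an issue the paper's proof silently shares.
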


\begin{proof}
Suppose (\ref{simulphi}) is satisfied with strict inequality. 
The right terms of (\ref{simulr1})-(\ref{simuldes}) and the right term of (\ref{simulphi}) are decreasing and increasing in $\rho$, respectively. So, we can reduce $\rho$ with a small enough $\epsilon>0$ such that all the constraints (\ref{simulr1})-(\ref{simulphi}) are satisfied with strict inequality. Thus, both $R_1$ and $R_2$ can be added with no constraints violation, which enhances the sum rate and contradicts the optimality.
On the other hand, suppose (\ref{simuldes}) is satisfied with strict inequality. Similarly, we can feasibly increase $\rho$ in a small enough value, letting $R_1$ and $R_2$ increase, and therefore the sum rate. Thus, Lemma \ref{lm2} follows. 	
\end{proof}

\begin{theorem}\label{simultaneous_sumrate_theorem}
The optimal PS factor $\tilde{\rho}$  in $\mathbf{P} \mathbf{1}$ is derived by solving the equation
\begin{equation}\label{rho*}
    \varphi \Big(\frac{1}{2}\log\big(1+\dfrac{(1-\rho)(a-N)}{(1-\rho)N+N_p}\big)\Big)=\psi\big(\rho a\big).
\end{equation}Then, the optimal sum-rate is given by
\begin{equation}\label{sum-rate-optimals}
   \tilde{R_1}+\tilde{R_2} =\frac{1}{2}\log\Big(1+\dfrac{(1-\tilde{\rho})(a-N)}{(1-\tilde{\rho})N+N_p}\Big),
\end{equation}
wherein $0 \leq \tilde{\rho}\leq 1$ and $\tilde{R_1}+\tilde{R_2}$ exist and are unique.
\end{theorem}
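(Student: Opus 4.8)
The plan is to invoke Lemma~\ref{lm2} to collapse $\mathbf{P1}$ to a single scalar equation. By Lemma~\ref{lm2}, an optimal point of $\mathbf{P1}$ makes both the sum-rate bound~(\ref{simuldes}) and the decoding-cost bound~(\ref{simulphi}) tight. Set $h(\rho)\eqdef\frac12\log\big(1+\frac{(1-\rho)(a-N)}{(1-\rho)N+N_p}\big)$ (recall $a-N=|h_1|^2P_1+|h_2|^2P_2$) and $g(\rho)\eqdef\varphi^{-1}\big(\psi(\rho a)\big)$; tightness of~(\ref{simuldes}) forces $\tilde R_1+\tilde R_2=h(\tilde\rho)$ and tightness of~(\ref{simulphi}) forces $\tilde R_1+\tilde R_2=g(\tilde\rho)$, so $h(\tilde\rho)=g(\tilde\rho)$. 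Applying $\varphi$ to both sides removes the inverse and gives exactly~(\ref{rho*}), and then~(\ref{sum-rate-optimals}) is just the reading $\tilde R_1+\tilde R_2=h(\tilde\rho)$. As a sanity check, this $\tilde\rho$ is the point $\rho_c^*$ of Theorem~\ref{simultheorem}, where~(\ref{simuldes}) and~(\ref{simulphi}) hold simultaneously with equality, consistent with the Remark stating that the sum rate is maximized at $\rho_c^*$.

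Next I would establish existence, uniqueness, and optimality. Define $F(\rho)\eqdef\varphi(h(\rho))-\psi(\rho a)$ on $[0,1]$. Its first term is non-increasing in $\rho$, since $h$ is decreasing (for $N_p>0$ one has $\frac{\partial}{\partial\rho}\frac{(1-\rho)(a-N)}{(1-\rho)N+N_p}<0$) and $\varphi$ is non-decreasing, while $\psi(\rho a)$ is continuous and \emph{strictly} increasing (the logistic in~(\ref{rectifier:q1q2}), hence $\psi^{DC}$, being strictly increasing and $a>0$); therefore $F$ is continuous and strictly decreasing, so it has at most one zero. Using the zero-crossing property $\psi(0)=0$ and $\varphi(0)=0$, $F(0)=\varphi(h(0))\ge 0$ and $F(1)=-\psi(a)<0$ (more generally $F(1)=\varphi(0)-\psi(a)\le 0$ is simply the standing feasibility of $\mathbf{P1}$, namely that full splitting harvests enough power to pay the baseline decoding cost), so by the intermediate value theorem a unique $\tilde\rho\in[0,1]$ exists and $\tilde R_1+\tilde R_2=h(\tilde\rho)$ is pinned down. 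Finally, at a fixed $\rho$ the largest feasible sum rate is $\min\{g(\rho),h(\rho)\}$: it cannot exceed either bound by~(\ref{simuldes})--(\ref{simulphi}), and any $S\le\min\{g(\rho),h(\rho)\}$ can be split between the users without violating~(\ref{simulr1})--(\ref{simulr2}) because~(\ref{simuldes}) is never tighter than~(\ref{simulr1}) plus~(\ref{simulr2}). Since $g$ is non-decreasing, $h$ is non-increasing, and they cross at $\tilde\rho$, we get $\min\{g(\rho),h(\rho)\}\le h(\tilde\rho)$ for all $\rho$ with equality at $\tilde\rho$; hence $\tilde\rho$ is optimal.

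The step I expect to be the main obstacle is the care needed around the generality of $\varphi$: since it is assumed only non-decreasing, $\varphi^{-1}$ must be read as a (generalized) inverse, and ``applying $\varphi$'' to $g(\tilde\rho)=h(\tilde\rho)$ requires $\varphi(\varphi^{-1}(y))=y$, which is clean only when $\varphi$ is continuous and strictly increasing and so should be stated explicitly. The endpoint sign checks for $F$ -- in particular $F(1)\le 0$ -- are a physical feasibility hypothesis rather than something provable, since if the destination cannot harvest enough power to cover $\varphi(0)$ then $\mathbf{P1}$ has no feasible point. A secondary loose end is the degenerate case $N_p=0$, where $h$ is constant in $\rho$ so the set of sum-rate-optimal $\rho$ becomes an interval, although~(\ref{rho*}) still has a unique solution because its right-hand side $\psi(\rho a)$ is strictly increasing. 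Everything else is routine manipulation.
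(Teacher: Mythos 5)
Your proposal follows essentially the same route as the paper: invoke Lemma~\ref{lm2} to make both (\ref{simuldes}) and (\ref{simulphi}) tight, equate their right-hand sides to obtain (\ref{rho*}), and then use the opposite monotonicity of the two sides on $[0,1]$ (with the endpoint sign checks) to get existence and uniqueness before substituting back for the sum rate. Your extra verification that the crossing point indeed maximizes $\min\{g(\rho),h(\rho)\}$, and your remarks on $\varphi(0)=0$, the generalized inverse, and the $N_p=0$ degeneracy, are sound refinements of the same argument rather than a different approach.
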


\begin{proof}
From Lemma \ref{lm2}, the right terms of (\ref{simuldes}) and (\ref{simulphi}) are equal in the optimal solution of $\mathbf{P} \mathbf{1}$, and thus, $\tilde{\rho}$ satisfies  (\ref{rho*}).  The left side of (\ref{rho*}) is decreasing  with  $\rho$ with strictly positive and zero values by setting $\rho=0$ and $\rho=1$, respectively. On the other hand, the right side of (\ref{rho*}) is increasing with  $\rho$ with zero and strictly positive values by setting $\rho=0$ and $\rho=1$, respectively. Hence, the optimal PS factor $0\leq \tilde{\rho}\leq 1$ exists and is unique. Then, we derive the 
the optimal sum rate in (\ref{sum-rate-optimals}) by substituting $\tilde{\rho}$ in the right term of (\ref{simuldes}).
\end{proof}

\begin{remark}
It is revealed that in simultaneous decoding scheme, $\tilde{\rho}=\rho_c^*$. Furthermore, the optimal sum rate $\tilde{R_1}+\tilde{R_2}$ is upper-bounded by $\varphi^{-1}(\psi(a))$.
\end{remark}

\begin{remark}\label{remark_linear_simultaneous}
Mostly in literature, the decoding cost is assumed a convex function of the form $\varphi(R)=\beta(2^{2R}-1)$. Moreover, as a worst case of a practical PS-SWIPT systems, $N$ is neglected as $N\ll N_p$ since the splitted signal for the information recovery passes through several standard operation to be converted from the RF band to baseband. During this process, additional noise called signal processing noise is present due to the  phase-offsets,  non-linearities, etc. \cite{clerckx2018fundamentals}.
\end{remark}

As a special case as described in  Remark \ref{remark:psi_linearity} and \ref{remark_linear_simultaneous}, we assume   $\psi(P_{in})=\eta P_{in}$,  $\varphi(R)=\beta(2^{2R}-1)$, and  $N\ll N_p$. Then, 
 the optimal PS factor in $\mathbf{P} \mathbf{1}$ is derived as 
 \begin{equation*}\label{sum-rate-optimal}
   \tilde{\rho} =\frac{\beta (a-N)}{\beta (a-N)+\eta a N_p},
\end{equation*}which is an unit-less scalar ranged between $0$ and $1$.

\subsection{Successive Interference Cancellation Decoding Scheme}
Assume that the destination applies SIC technique to decode the messages received by the users, \cite{el2011network}. To decode the first user's message, the destination treats the second user's signal as a noise. So, the receiver cancels out the recovered signal of the first user and decodes the second user's signal. Accordingly, the achievable rate region is confined by
\begin{align}     
		R_1 &\leq \frac{1}{2}\log\Big(1+\dfrac{(1-\rho)|h_1|^2P_1}{(1-\rho)N_u+N_p}\Big),\label{successr1}\\
		 R_2 &\leq \frac{1}{2}\log\Big(1+\dfrac{(1-\rho)|h_2|^2P_2}{(1-\rho)N+N_p}\Big),	\label{successr2}\\
	\hspace{-1.5cm} \varphi(R_1)+\varphi(R_2) &\leq \psi\big( \rho(|h_1|^2P_1+|h_2|^2P_2+N)\big).\label{successphi}
		\end{align}
		where $N_u=|h_2|^2P_2+N$, and  (\ref{successphi}) indicates the decoding cost restriction resulting from the sequential decoding in SIC.

\begin{theorem}
	The MDRB is calculated as
\begin{equation}
			\begin{aligned}\label{simulregion1}
			\hspace{-0.2cm} \begin{cases}

			 R_1^*=\varphi^{-1}\Big(\psi\big(\rho a\big)-\varphi(R_2^*)\Big),\\ \hspace{5.5cm}\rho_1^*\leq\rho\leq \rho_c^* \\
			 R_2^*=\frac{1}{2}\log\Big(1+\dfrac{(1-\rho)|h_2|^2P_2}{(1-\rho)N+N_p}\Big),
			
						\end{cases}
			\end{aligned}
			\end{equation}
and
\begin{equation}
		\begin{aligned}\label{simulregion3}
		\hspace{-0.2cm}\begin{cases}
			 R_2^*=\varphi^{-1}\Big(\psi\big( \rho a \big)-\varphi(R_1^*)\Big), \\ \hspace{5.5cm} \rho_2^*\leq\rho\leq \rho_c^*\\
			 R_1^*=\frac{1}{2}\log\Big(1+\dfrac{(1-\rho)|h_1|^2P_1}{(1-\rho)N_u+N_p}\Big),
					\end{cases}
			\end{aligned}
\end{equation}
where 
	
	{{\begin{equation}
			\begin{aligned}\nonumber
			&\hspace{-1.5cm}\rho_c^*=\Gamma_c^{-1}( a), \rho_1^*=\Gamma_1^{-1}( a),  \rho_2^*=\Gamma_2^{-1}( a),
			\end{aligned}
			\end{equation} }}
and 
	{{
			\begin{equation}
			\begin{aligned}\nonumber
			\Gamma_c(x)=&\dfrac{1}{x}\psi^{-1}\Bigg(\varphi\left(\frac{1}{2}\log\left(1+\dfrac{(1-x)|h_1|^2P_1}{(1-x)N_u+N_p}\right)\right)\\
			&+\varphi\left(\frac{1}{2}\log\left(1+\dfrac{(1-x)|h_2|^2P_2}{(1-x)N+N_p}\right)\right)\Bigg),\\
			\Gamma_1(x)=&\dfrac{1}{x}\psi^{-1}\Bigg(\varphi\left(\frac{1}{2}\log\left(1+\dfrac{(1-x)|h_2|^2P_2}{(1-x)N+N_p}\right)\right)\Bigg),\\
			\Gamma_2(x)=&\dfrac{1}{x}\psi^{-1}\Bigg(\varphi\left(\frac{1}{2}\log\left(1+\dfrac{(1-x)|h_1|^2P_1}{(1-x)N_u+N_p}\right)\right)\Bigg),\\
			a =&|h_1|^2P_1+|h_2|^2P_2+N.
			\end{aligned}
			\end{equation}}}
	
\end{theorem}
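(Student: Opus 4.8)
The plan is to follow the template of the proof of Theorem~\ref{simultheorem}, with the coupled decoding-cost constraint (\ref{successphi}) now playing the role that the sum-rate bound (\ref{simuldes}) played there; note that in the SIC scheme the two individual constraints (\ref{successr1})--(\ref{successr2}) already telescope into a sum-rate bound, so (\ref{successphi}) is the only remaining constraint that links $R_1$ and $R_2$. First I would record the monotonicities that power every perturbation: since $\varphi(\cdot)$ and $\psi(\cdot)$ are non-decreasing, the right-hand sides of (\ref{successr1}) and (\ref{successr2}), which I denote $R_1^{\max}(\rho)$ and $R_2^{\max}(\rho)$, are decreasing in $\rho$, whereas $\psi(\rho a)$ is increasing in $\rho$. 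Thus lowering $\rho$ relaxes the individual rate bounds but shrinks the decoding budget, and raising $\rho$ does the opposite. I would also remark that $\varphi$ and $\psi$ admit (generalized) inverses, so the expressions $\varphi^{-1}$ and $\psi^{-1}$ in the statement are well defined.

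Next I would show that on the MDRB the decoding-cost constraint (\ref{successphi}) holds with equality, by the same $\epsilon/M$ argument as in Theorem~\ref{simultheorem}: if (\ref{successphi}) were strict then (i) if both (\ref{successr1}) and (\ref{successr2}) were also strict, shrink $\rho$ by a small $\epsilon$ so all three stay strict, then raise $R_1$ and $R_2$ by $\epsilon/M$; (ii) if exactly one individual bound were tight, raise the other rate directly; (iii) if both individual bounds were tight, shrink $\rho$ by a small $\epsilon$ so (\ref{successr1})--(\ref{successr2}) gain $\Theta(\epsilon)$ slack while (\ref{successphi}) keeps positive slack, then raise both rates. Each case contradicts maximality. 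A symmetric argument shows that, with (\ref{successphi}) tight, at least one of (\ref{successr1}), (\ref{successr2}) must be tight as well: otherwise raise $\rho$ slightly, relaxing (\ref{successphi}) while (\ref{successr1})--(\ref{successr2}) still hold, and then raise both rates.

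I would then split the frontier into the two arcs. Suppose (\ref{successphi}) and (\ref{successr2}) are the active constraints (the other case is mirror-symmetric): then $R_2^* = R_2^{\max}(\rho)$ and $R_1^* = \varphi^{-1}\!\big(\psi(\rho a) - \varphi(R_2^*)\big)$, which is precisely (\ref{simulregion1}). Admissibility of this point on the MDRB requires $R_1^* \ge 0$ and $R_1^* \le R_1^{\max}(\rho)$. As $\rho$ increases, $\psi(\rho a)$ increases and $\varphi(R_2^*)$ decreases, so $R_1^*$ rises from the value $0$, attained exactly where $\psi(\rho a) = \varphi(R_2^*)$, i.e.\ where $\rho\,a = \psi^{-1}\big(\varphi(R_2^{\max}(\rho))\big)$, equivalently $\Gamma_1(\rho) = a$, which is the definition of $\rho_1^*$. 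Meanwhile $R_1^{\max}(\rho)$ decreases, so $R_1^*$ first catches up to it where $\psi(\rho a) = \varphi(R_1^{\max}(\rho)) + \varphi(R_2^{\max}(\rho))$, i.e.\ $\Gamma_c(\rho) = a$; this is $\rho_c^*$, and at that $\rho$ both individual constraints bind, giving the corner point shared by the two arcs. Since each $\Gamma$ is a product of the decreasing map $1/x$ with a composition of the non-decreasing maps $\psi^{-1}$, $\varphi$ and the decreasing rate expressions, $\Gamma_1$, $\Gamma_2$, $\Gamma_c$ are decreasing, so $\rho_1^* = \Gamma_1^{-1}(a)$, $\rho_2^* = \Gamma_2^{-1}(a)$, $\rho_c^* = \Gamma_c^{-1}(a)$ are well defined and the admissible range of $\rho$ on this arc is exactly $\rho_1^* \le \rho \le \rho_c^*$. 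The mirror argument with (\ref{successr1}) active gives (\ref{simulregion3}) over $\rho_2^* \le \rho \le \rho_c^*$, and the union of the two arcs, joined at $\rho = \rho_c^*$, is the claimed MDRB.

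I expect the main difficulty to be the bookkeeping rather than any new idea: verifying the monotonicity and sign of each $\Gamma$ so the inverses exist and the $\rho$-intervals come out as stated; checking that values of $\rho$ outside $[\rho_1^*,\rho_c^*]$ (resp.\ $[\rho_2^*,\rho_c^*]$) yield only dominated pairs, not new boundary points; handling the degenerate cases in which $\rho_1^*$ or $\rho_2^*$ exceeds $\rho_c^*$, so that arc collapses to the corner; and dealing carefully with intervals on which $\varphi$ or $\psi$ is flat, where one must pass to generalized inverses and confirm the perturbations still strictly increase the sum rate.
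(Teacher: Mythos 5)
Your proposal is correct and follows essentially the same route as the paper: the paper's proof simply invokes the perturbation ($\epsilon/M$) argument of Theorem~\ref{simultheorem} to conclude that the MDRB consists of the rate pairs satisfying (\ref{successphi}) together with one of (\ref{successr1}) or (\ref{successr2}) with equality, which is exactly your case analysis. Your additional bookkeeping (monotonicity of the $\Gamma$'s and the derivation of the endpoints $\rho_1^*$, $\rho_2^*$, $\rho_c^*$ from $R_1^*\ge 0$ and $R_1^*\le R_1^{\max}(\rho)$) just makes explicit what the paper leaves implicit, and is consistent with the statement.
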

\begin{proof}
	Following the similar steps as in the proof of Theorem 1, the MDRB in SIC is created by the pairs of rates ($R_1^*, R_2^*$) who satisfy (\ref{successphi}) and one of (\ref{successr1}) or (\ref{successr2}) with equality.
\end{proof}
Now, by changing the decoding order, we firstly decode the second user's message. Then, we cancel this data out from the received signal, and finally decode the first user's message. So, by swapping the indices 1 and 2 in the achievable rate region (\ref{successr1})-(\ref{successphi}), we derive the new MDRB $(\hat{R}_1,\hat{R}_2)$. Then, applying the time-sharing technique over the  two MDRBs resulting from each decoding order, the convex hull  of $(R_1^*,R_2^*)\cup (\hat{R}_1,\hat{R}_2)$ forms the maximum departure region.\par
It is worth pointing out that the destination in SIC scheme prefers to primarily decode the data from the user with better channel gain, and thereafter, to decode the other data.  So without loss of generality, we assume that the channel between the first user and the destination is stronger than the second one. Thus, the sum rate optimization problem is written as
\begin{align*}
\hspace{-1.7cm}\mathbf{P} \mathbf{2}:\max _{{\rho,R_1,R_2}}\hspace{0cm} & R_1+R_2 \\
\hspace{-0.5cm}\text { {s.t.} }\hspace{0cm}&(\ref{successr1}),  (\ref{successr2}), (\ref{successphi}), \hspace{0.1cm} \text{and} \hspace{0.1cm}0\leq\rho\leq1, 
\end{align*}
which is a non-convex problem whose solution is not as straightforward as $\mathbf{P} \mathbf{1}$, due to the unfavorable left term in  (\ref{successphi}). However, using the fact that the both functions $\varphi(.)$ and $\psi(.)$ are non-decreasing, we characterize the following lemma.

\begin{lemma}\label{SIC_sumrate_lemma}
Optimally in $\mathbf{P} \mathbf{2}$, the harvested energy at the destination is fully consumed, i.e., (\ref{successphi}) is satisfied with equality. Moreover, at least one of the constraints (\ref{successr1}) or  (\ref{successr2}) are satisfied with equality.
\end{lemma}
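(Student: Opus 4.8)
The plan is to mirror, for $\mathbf{P}\mathbf{2}$, the perturbation argument already used for Lemma \ref{lm2}, exploiting the monotonicity of the constraint functions in $\rho$. The structural facts I would rely on are: the right-hand sides of (\ref{successr1}) and (\ref{successr2}) are strictly decreasing in $\rho$, the right-hand side of (\ref{successphi}) is strictly increasing in $\rho$ (the logistic-based $\psi$, and likewise the linear model of Remark \ref{remark:psi_linearity}, is strictly increasing and continuous), and $\varphi$ is non-decreasing. A preliminary observation is that any optimal triple $(\rho,R_1,R_2)$ with positive sum rate has $0<\rho<1$: at $\rho=0$ the energy budget in (\ref{successphi}) is zero, and at $\rho=1$ the signal $Y_{ID}$ carries only processing noise, so either endpoint forces $R_1=R_2=0$; hence both upward and downward perturbations of $\rho$ are admissible at an optimum.

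First I would prove that (\ref{successphi}) is tight. Suppose, toward a contradiction, that at an optimal point (\ref{successphi}) holds with strict inequality. Decreasing $\rho$ by a small $\epsilon>0$ only relaxes (\ref{successr1}) and (\ref{successr2}) and, by continuity of $\psi$, keeps (\ref{successphi}) strict; thus all three constraints now hold with strict inequality. Then $R_1$ and $R_2$ can both be increased by a small common amount without violating any constraint (using only that $\varphi$ is non-decreasing, exactly as in the proof of Lemma \ref{lm2}), which strictly increases $R_1+R_2$ and contradicts optimality. Hence (\ref{successphi}) must hold with equality.

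Next I would show at least one of (\ref{successr1}), (\ref{successr2}) is tight. Suppose instead both are strict at the optimum; by the previous step (\ref{successphi}) is then an equality. Increasing $\rho$ by a small $\epsilon>0$ strictly increases the right-hand side $\psi(\rho a)$ of (\ref{successphi}) (strict monotonicity of $\psi$), so (\ref{successphi}) becomes strict, while (\ref{successr1}) and (\ref{successr2}) remain strict for $\epsilon$ small enough. All constraints are again strict, so a small simultaneous increase of $R_1$ and $R_2$ is feasible and raises the sum rate, a contradiction. Therefore at least one of the two rate constraints is met with equality, which is the claim.

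The only delicate point I anticipate is the ``bump both rates'' move in the presence of a merely non-decreasing (possibly discontinuous) $\varphi$ in (\ref{successphi}): unlike $\mathbf{P}\mathbf{1}$, here $\varphi$ is applied to $R_1$ and $R_2$ separately, so a jump of $\varphi$ could in principle consume the slack created in (\ref{successphi}). This is handled precisely as in Lemma \ref{lm2} --- one first creates strict slack in \emph{all} constraints via the $\rho$-perturbation and only then chooses the rate increment; for continuous $\varphi$ this is immediate, and for the standard convex model $\varphi(R)=\beta(2^{2R}-1)$ of Remark \ref{remark_linear_simultaneous} it is trivial. Note that the coupling in the left term of (\ref{successphi}) that makes $\mathbf{P}\mathbf{2}$ genuinely harder than $\mathbf{P}\mathbf{1}$ does not obstruct this particular argument, since we never need to solve for the rate split $(R_1,R_2)$ --- only to certify local feasibility of a joint increase.
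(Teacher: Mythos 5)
Your proposal is correct and follows essentially the same route as the paper's own proof: tighten (\ref{successphi}) by decreasing $\rho$ and jointly bumping $R_1,R_2$, then tighten at least one of (\ref{successr1})--(\ref{successr2}) by increasing $\rho$ and again bumping both rates. Your added care about the endpoints $\rho\in\{0,1\}$ and about a possibly discontinuous $\varphi$ is a refinement the paper leaves implicit, but it does not change the argument.
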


\begin{proof}
Assume that (\ref{successphi}) is satisfied with strict inequality in the optimal solution of $\mathbf{P} \mathbf{2}$. The contradiction is proven by decreasing  $\rho$ with small enough $\epsilon>0$ and increasing $R_1$ and $R_2$ with $\epsilon/M$ for a large enough $M>0$. Then, noting that (\ref{successphi}) is satisfied with equality, we assume that both (\ref{successr1}) and (\ref{successr2}) are satisfied with strict inequality in the optimal solution. The contradiction here is shown by feasibly adding $\rho$, $R_1$, and $R_2$ with $\epsilon$ and $\epsilon/M$, respectively. So, Lemma \ref{SIC_sumrate_lemma} follows.
\end{proof}
Lemma \ref{SIC_sumrate_lemma} implies that in the optimal solution of $\mathbf{P} \mathbf{2}$, the both constraints (\ref{successr1}) and (\ref{successr2}) are not necessarily satisfied with equality at the same time. As a result, the optimal solution to $\mathbf{P} \mathbf{2}$ is derived through Theorem \ref{SIC_sumrate_Theorem}.
\begin{theorem}\label{SIC_sumrate_Theorem}
In $\mathbf{P} \mathbf{2}$, the optimal PS factor $\tilde{\rho}$ is calculated as
\begin{equation}\small
\begin{aligned}
\hspace{-0cm} 
\tilde{\rho}=&ARGMAX\big(f_1(\rho_{1}^*),f_2(\rho_{2}^*), f_1(\rho_{11}),..., f_1(\rho_{1n}),\\
&f_2(\rho_{21}),..., f_2(\rho_{2m}),f_1(\rho_{c}^*),f_2(\rho_{c}^*)\big),  
\end{aligned}
\end{equation}
and the optimal sum-rate is derived by
\begin{equation}
\begin{aligned}
\hspace{-0.5cm}  
\tilde{R}_1+\tilde{R}_2=&MAX\big(f_1(\rho_{1}^*),f_2(\rho_{2}^*),f_1(\rho_{c}^*),f_2(\rho_{c}^*),\\
&f_1(\rho_{11}),..., f_1(\rho_{1n}),  f_2(\rho_{21}),..., f_2(\rho_{2m})\big),
\end{aligned}
\end{equation}
where $\rho_{1}^*<\rho_{1i}<\rho_{c}^*$ and $\rho_{2}^*<\rho_{2j}<\rho_{c}^*$, for $1 \leq i\leq n$ and $1 \leq j\leq m$, denote the $i^{\text{th}}$ and $j^{\text{th}}$  answers to the differential equations $f'_1(x)=0$ and $f'_2(x)=0$, respectively, wherein
\begin{align*}
\hspace{-1.7cm} & f_1(x)=\varphi^{-1}\Big(\psi\big(x a\big)-\varphi\big(g_1(1-x)\big)\Big)+g_1(1-x),\\
\hspace{-0.5cm} & f_2(x)=\varphi^{-1}\Big(\psi\big(x a\big)-\varphi\big(g_1(1-x)\big)\Big)+g_2(1-x)\\
&g_1(x)=\frac{1}{2}\log\big(1+\dfrac{x|h_2|^2P_2}{xN+N_p}\big),\\
&g_2(x)=\frac{1}{2}\log\big(1+\dfrac{x|h_1|^2P_1}{xN_u+N_p}\big).
\end{align*}
\end{theorem}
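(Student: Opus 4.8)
The plan is to use Lemma~\ref{SIC_sumrate_lemma} to collapse the three-variable non-convex program $\mathbf{P} \mathbf{2}$ onto a one-dimensional search over the MDRB, and then to resolve that search with elementary calculus. By Lemma~\ref{SIC_sumrate_lemma}, every optimizer of $\mathbf{P} \mathbf{2}$ makes (\ref{successphi}) active and at least one of (\ref{successr1}), (\ref{successr2}) active; hence the optimizer lies on one of the two arcs of the MDRB described by (\ref{simulregion1})--(\ref{simulregion3}), each of which is parametrized by the single variable $\rho$.

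First I would write the sum rate along each arc as an explicit function of $\rho$. On the arc (\ref{simulregion1}), where (\ref{successr2}) is tight, substituting $R_2^{*}=g_1(1-\rho)$ and then using the active constraint (\ref{successphi}) to obtain $R_1^{*}=\varphi^{-1}\big(\psi(\rho a)-\varphi(g_1(1-\rho))\big)$ gives $R_1^{*}+R_2^{*}=f_1(\rho)$, valid for $\rho\in[\rho_1^{*},\rho_c^{*}]$---the range over which the induced $R_1^{*}$ is nonnegative and (\ref{successr1}) is automatically satisfied, as already built into the MDRB construction. The symmetric computation on the arc (\ref{simulregion3}) gives $f_2(\rho)$ for $\rho\in[\rho_2^{*},\rho_c^{*}]$. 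Therefore the optimal value of $\mathbf{P} \mathbf{2}$ equals $\max\big\{\max_{\rho\in[\rho_1^{*},\rho_c^{*}]}f_1(\rho),\ \max_{\rho\in[\rho_2^{*},\rho_c^{*}]}f_2(\rho)\big\}$, and $\tilde\rho$ is a maximizer.

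It then remains to maximize each $f_i$ over its compact interval. Both functions are continuous (compositions of the continuous, strictly monotone maps $\varphi$, $\psi$, $\varphi^{-1}$ with the logarithmic rate functions) and differentiable on the interior, so by the extreme value theorem together with the first-order stationarity condition the maximizer of $f_i$ is either an endpoint---$\rho_1^{*}$, $\rho_2^{*}$, or the shared endpoint $\rho_c^{*}$---or an interior root of $f_i'(\rho)=0$; label these roots $\rho_{1i}$ for $1\le i\le n$ and $\rho_{2j}$ for $1\le j\le m$. Taking $\mathrm{MAX}$ (respectively $\mathrm{ARGMAX}$) of the $f_i$-values over this finite candidate set reproduces the stated formulas; moreover one checks directly from $\rho_c^{*}=\Gamma_c^{-1}(a)$, $\rho_1^{*}=\Gamma_1^{-1}(a)$, $\rho_2^{*}=\Gamma_2^{-1}(a)$ that $\psi(\rho_c^{*}a)=\varphi(g_1(1-\rho_c^{*}))+\varphi(g_2(1-\rho_c^{*}))$, so that $f_1(\rho_c^{*})=f_2(\rho_c^{*})$, and that $f_i$ degenerates to a single rate term at $\rho_i^{*}$; this confirms that the listed points are exactly the endpoints and stationary points of the two arcs.

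The main obstacle is that, in contrast with $\mathbf{P} \mathbf{1}$, the objective is not unimodal along the MDRB: the term $\varphi^{-1}\big(\psi(\rho a)-\varphi(\cdot)\big)$ is increasing in $\rho$ while the residual rate term $g_1(1-\rho)$ (or $g_2(1-\rho)$) is decreasing, and for a general non-decreasing $\varphi$ their sum can change monotonicity several times, so there is no single closed-form critical point and one must genuinely enumerate the roots of $f_i'=0$ and compare them against one another and the endpoints. The rest---carrying out the differentiation of the composite expressions $f_1,f_2$ and verifying that on each sub-arc the claimed constraint is the binding one, so that the parametrization in (\ref{simulregion1})--(\ref{simulregion3}) is the correct one---is routine once Lemma~\ref{SIC_sumrate_lemma} and the preceding MDRB theorem are in hand.
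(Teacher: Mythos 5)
Your proposal is correct and follows essentially the same route as the paper's Appendix A: restrict the search to the two MDRB arcs, parametrize each arc by $\rho$ so that the sum rate becomes $f_1(\rho)$ on $[\rho_1^*,\rho_c^*]$ and $f_2(\rho)$ on $[\rho_2^*,\rho_c^*]$, and then enumerate the endpoints together with the interior roots of $f_i'(\rho)=0$. The only minor difference is that the paper also verifies that time-sharing (the convex hull of the two arcs) cannot increase the maximum sum rate, a step your direct appeal to Lemma~\ref{SIC_sumrate_lemma} on the constraints of $\mathbf{P}\mathbf{2}$ makes unnecessary for the problem as stated.
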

\begin{proof}
See Appendix A.
\end{proof}
\begin{lemma}\label{SIC_linear_optimal}
As a special case as described in  Remark \ref{remark:psi_linearity} and \ref{remark_linear_simultaneous}, we assume   $\psi(P_{in})=\eta P_{in}$,  $\varphi(R)=\beta(2^{2R}-1)$, and  $N\ll N_p$. Then, the optimal PS factor in $\mathbf{P} \mathbf{2}$ is calculated as
\begin{equation}
\tilde{\rho}=0.5\left(1+ \dfrac{\beta B^2+\beta C N_p+\eta a N_p^2-\Delta}{\beta B^2+\eta a N_p B}\right),
\end{equation}
and the optimal sum rate is derived as
\begin{equation}
\begin{aligned}
\hspace{-8cm}\tilde{R}_1+\tilde{R}_2=&\dfrac{1}{2}\log\Big(1+\dfrac{(1-\tilde{\rho})A}{(1-\tilde{\rho})N_u+N_p}\Big)\\ &+\frac{1}{2}\log\Big(1+\dfrac{(1-\tilde{\rho})B}{N_p}\Big)
\end{aligned}
\end{equation}
where
\begin{equation}
   \hspace{-0cm} \begin{aligned} \label{tarif}
    A=&|h_1|^2P_1,~~~ B=|h_2|^2P_2,~~~ C=A+B,\\
    \Delta=&N_p\big[(\eta a B-\beta C)^2 + \eta a (\eta a N_p^2+2\eta a N_p B
    \\&+2\beta C N_p+4\beta B^2)\big]^{0.5}  .
    \end{aligned}
\end{equation}

\end{lemma}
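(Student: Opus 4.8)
The plan is to specialise Theorem~\ref{SIC_sumrate_Theorem} to the linear rectifier $\psi(P_{in})=\eta P_{in}$ and the convex cost $\varphi(R)=\beta(2^{2R}-1)$, for which $\psi^{-1}(y)=y/\eta$, $\varphi^{-1}(y)=\tfrac12\log(1+y/\beta)$ and, crucially, $\varphi\big(\tfrac12\log(1+t)\big)=\beta t$; applying $N\ll N_p$ as in Remark~\ref{remark_linear_simultaneous} sends $N_u\to B$ and $(1-\rho)N+N_p\to N_p$, so the caps in \eqref{successr1}--\eqref{successr2} become $r_1(\rho)=\tfrac12\log\!\big(1+\tfrac{(1-\rho)A}{(1-\rho)B+N_p}\big)$ and $r_2(\rho)=\tfrac12\log\!\big(1+\tfrac{(1-\rho)B}{N_p}\big)$, both strictly decreasing in $\rho$. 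The first step is to show that in this special case the optimiser is the right end point of the feasible interval, i.e.\ $\tilde\rho=\rho_c^*$ (with $\rho_c^*=\Gamma_c^{-1}(a)$ as in Theorem~\ref{SIC_sumrate_Theorem}), equivalently that all three constraints \eqref{successr1}--\eqref{successphi} are active at the optimum; by Lemma~\ref{SIC_sumrate_lemma} we already know \eqref{successphi} is active.

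To prove $\tilde\rho=\rho_c^*$ I would split on the sign of $\psi(\rho a)-\varphi(r_1(\rho))-\varphi(r_2(\rho))$, noting that $\Gamma_c(\rho)=a$ is precisely the equation $\varphi(r_1(\rho))+\varphi(r_2(\rho))=\psi(\rho a)$. For $\rho\ge\rho_c^*$ the corner $(r_1(\rho),r_2(\rho))$ is feasible for \eqref{successphi}, so the achievable sum rate equals $r_1(\rho)+r_2(\rho)$, which is strictly decreasing; hence no $\rho>\rho_c^*$ is optimal. For $\rho<\rho_c^*$ the constraint \eqref{successphi} is active, and here I would exploit the convexity of $\varphi$ with $\varphi(0)=0$: this gives super-additivity $\varphi(x)+\varphi(y)\le\varphi(x+y)$ and concavity of $\varphi^{-1}$, from which (together with $\psi(\rho a)<\psi(\rho_c^*a)=\varphi(r_1(\rho_c^*))+\varphi(r_2(\rho_c^*))$ and the monotonicity of $r_1,r_2$) one shows that the largest $R_1+R_2$ attainable under the box constraints and \eqref{successphi} is strictly increasing in $\rho$ on $(0,\rho_c^*]$; a short derivative computation on the active face of the feasible set makes this precise. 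Combining the two cases gives $\tilde\rho=\rho_c^*$. (Equivalently, one may observe that with the explicit $\psi$ and $\varphi$ the branch functions $f_1,f_2$ of Theorem~\ref{SIC_sumrate_Theorem} reduce to $\tfrac12\log$ of a downward parabola in $1-\rho$, whose feasible maximum on $[\rho_1^*,\rho_c^*]$ in the worst-case regime $N\ll N_p$ is attained at $\rho_c^*$, so the $\mathrm{ARGMAX}$ collapses to $\rho_c^*$.) This is the only non-routine step.

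It then remains to compute $\rho_c^*$. Writing $u:=1-\rho$, the equation $\varphi(r_1(\rho))+\varphi(r_2(\rho))=\psi(\rho a)$ becomes, via $\varphi\big(\tfrac12\log(1+t)\big)=\beta t$,
\[
\beta\,\frac{uA}{uB+N_p}+\beta\,\frac{uB}{N_p}=\eta a(1-u),
\]
which after clearing denominators is the quadratic $\big(\beta B^2+\eta aN_pB\big)u^2+N_p\big(\beta C-\eta aB+\eta aN_p\big)u-\eta aN_p^2=0$. Its constant term is negative, so its two roots have opposite signs; taking the positive one, $u=(-\gamma+\Delta)/(2\alpha)$ with $\alpha=\beta B^2+\eta aN_pB$, $\gamma=N_p(\beta C-\eta aB+\eta aN_p)$ and $\Delta=\sqrt{\gamma^2+4\alpha\eta aN_p^2}$. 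A direct expansion shows $\Delta$ equals the expression in \eqref{tarif} and $2\alpha+\gamma=\alpha+\beta B^2+\beta CN_p+\eta aN_p^2$, so $\tilde\rho=1-u=(2\alpha+\gamma-\Delta)/(2\alpha)$ is exactly the stated closed form; $\tilde\rho\in[0,1]$ and uniqueness follow from Theorem~\ref{SIC_sumrate_Theorem}.

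Finally, since both \eqref{successr1} and \eqref{successr2} are active at $\tilde\rho=\rho_c^*$ (first step), substituting $\tilde\rho$ into $R_1=r_1(\tilde\rho)$ and $R_2=r_2(\tilde\rho)$ and adding yields the claimed optimal sum rate $\tfrac12\log\!\big(1+\tfrac{(1-\tilde\rho)A}{(1-\tilde\rho)N_u+N_p}\big)+\tfrac12\log\!\big(1+\tfrac{(1-\tilde\rho)B}{N_p}\big)$. I expect the convexity argument establishing $\tilde\rho=\rho_c^*$ to be the main obstacle; everything after it is routine manipulation of the quadratic formula.
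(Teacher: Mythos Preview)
Your quadratic computation of $\tilde\rho$ and the verification that it coincides with the stated closed form are correct and match the paper exactly. The substantive step is, as you identify, proving that the optimum lies at the corner $\rho_c^*$ where all three constraints are simultaneously active.

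Here the paper does \emph{not} argue abstractly via convexity/super-additivity of $\varphi$; your main route (``exploit the convexity of $\varphi$ with $\varphi(0)=0$ \ldots\ a short derivative computation makes this precise'') is not a complete argument as stated, and in fact the sign of $f_1'$ on the active face depends on the relative size of $R_1$ and $R_2$ through $\varphi'(R_1)-\varphi'(R_2)$, which is not controlled by convexity alone. What the paper actually does is precisely your parenthetical alternative: it fixes the branch where \eqref{successr2} and \eqref{successphi} are active, writes the sum rate explicitly (your $f_1$) as $\tfrac12\log$ of a quadratic in $\rho$ (eq.~\eqref{eq19} in Appendix~B), computes the unconstrained stationary point
\[
\rho_c=\tfrac12\Big(1+\dfrac{\beta B^2+\eta a N_p^2}{(\beta B+\eta aN_p)B}\Big),
\]
and then checks the single algebraic inequality $\beta CN_p<\Delta$ to conclude $\rho_1\le\rho_c$, where $\rho_1$ is the feasibility boundary at which \eqref{successr1} becomes active. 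Hence $f_1$ is increasing on $[\rho_2,\rho_1]$ and the constrained maximum is at $\rho_1=\rho_c^*$, forcing \eqref{successr1} to equality as well. So your ``equivalently'' remark is the paper's proof; promote it from a parenthesis to the main argument and drop the abstract convexity sketch, since that is where a genuine gap could creep in.

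One structural difference: you argue on both sides of $\rho_c^*$ (for $\rho\ge\rho_c^*$ the full corner is feasible and $r_1+r_2$ decreases), whereas the paper stays on the single branch \eqref{simulregion1} throughout and never needs the $\rho>\rho_c^*$ regime. Either organisation is fine; yours is slightly more symmetric with respect to the two branches $f_1,f_2$ of Theorem~\ref{SIC_sumrate_Theorem}, while the paper's is a touch shorter because it only ever manipulates the one explicit expression \eqref{eq19}.
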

\begin{proof}
From Lemma \ref{SIC_sumrate_lemma} and Appendix B.
\end{proof}
\section{PS-SWIPT MAC with user cooperation}
\subsection{System Model}
Here, we consider a two-user PS-SWIPT MAC with user cooperation wherein the destination acquires energy for decoding data by applying PS scheme over the received signals transmitted from both users. As depicted in Fig. 2, users potentially enhance the departure region by  creating a common information (see \cite{arafa2016energy}, \cite{su2015cooperative}) 
subject to the decoding cost at all receiving nodes, i.e., both users and the destination. Note that in this work, the terms user and encoder are used interchangeably. Also, perfect CSI is considered.\par
By utilizing the well-known Block Markov Encoding (BME)  for MAC with cooperative encoders at each transmission time block, both encoders transmit their codewords not only based on their own message, but also based on the decoded message of the other encoder which is decoded in the previous block, see \cite{el2011network}, \cite{kaya2007power} and \cite{willems1985discrete}. So, at each transmission block, both encoders transmit common and fresh messages, while the destination constructively receives the common message transmitted from both encoders. Then, since the common message is known at the encoders, each encoder is able to decode the fresh part of the other encoders’ message. This procedure is possible thanks to BME and
backward decoding at the destination which is developed for the classical relay channel, and for the MAC with noisy feedback. So, the transmitted codeword by each user is given by
\begin{align*}
\hspace{-0.3cm}X_i=\sqrt{P_{ij}}X_{ij} + \sqrt{P_{ui}}X_u, \hspace{0.5cm} i=1, 2, \hspace{0.2cm} j=1, 2, \hspace{0.2cm} i\neq j,
\end{align*}
\begin{figure}[t]
\centering
	\includegraphics[scale=0.65]{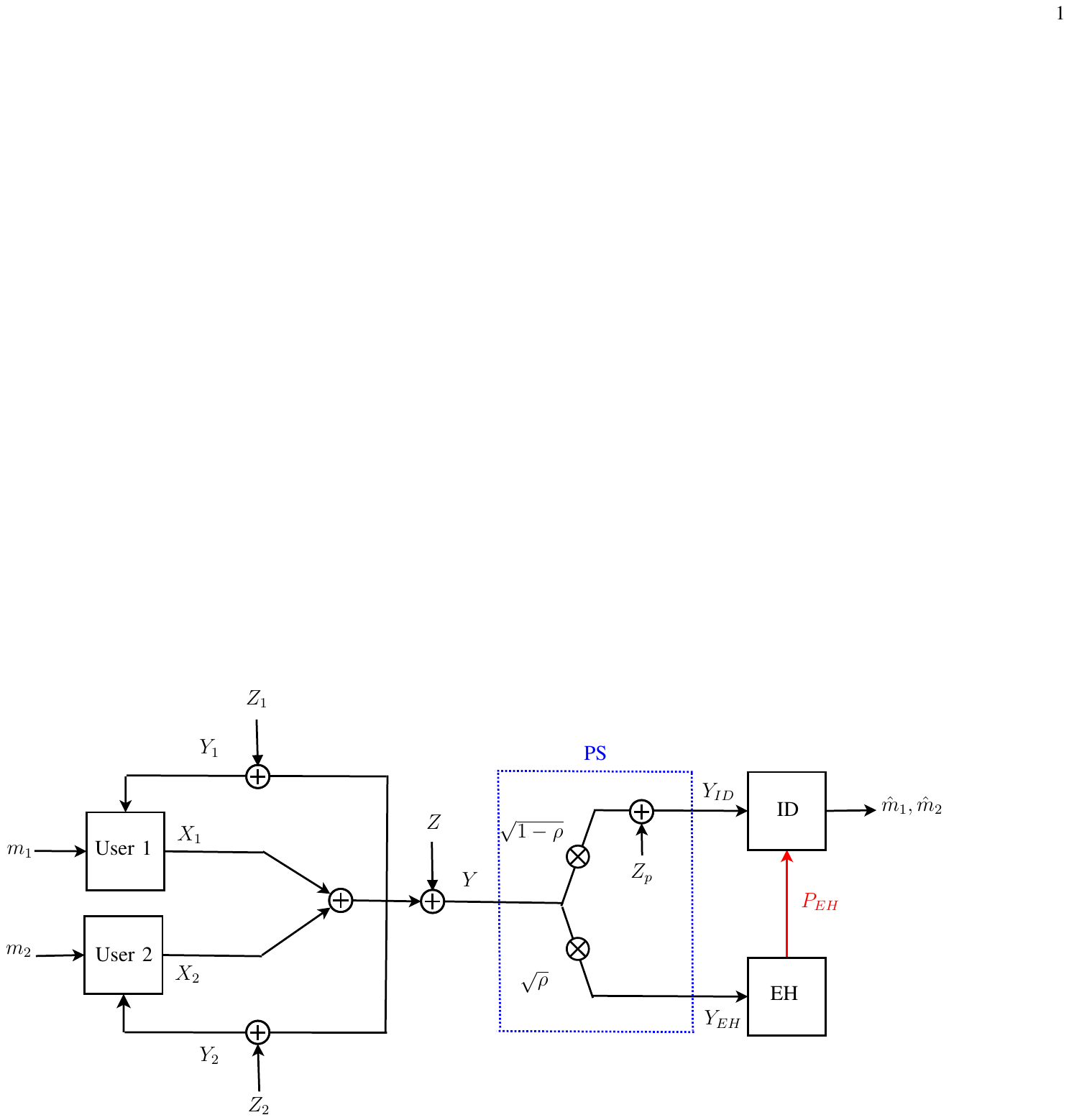}
\caption{PS-SWIPT MAC with user cooperation.}\label{a}
\end{figure}
where $X_{ij}$ with  $\mathbb{E}\lbrace{\vert{X_{ij}}\vert}^{2}\rbrace=P_{ij}$  denotes the transmitted symbol of the fresh information at user $i$ to be decoded at user $j$ and to create the symbol $X_{u}$, the common information with power $P_{ui}$. Therefore, $P_i=P_{ij}+P_{ui}$ refers to the total transmission power at user $i$. Now, the received signals at users and the destination are given by 
\begin{align*}
&\hspace{0cm}Y_j=h_{ij}X_{i} + Z_j,
\hspace{1cm} i=1, 2, \hspace{0.2cm} j=1, 2, \hspace{0.2cm} i\neq j, \\
&\hspace{0cm}Y=h_1X_1 + h_2X_2 + Z,
\end{align*}
where $h_j$ indicates the channel gain between user $j$ and the  destination and $h_{ij}$ shows the channel gain from user $i$ to $j$.
Besides, $Z_j \sim \mathcal{CN}(0,N_j)$ and $Z \sim \mathcal{CN}(0,N)$ represent AWGN terms at user $j$ and the destination, respectively.
However, it is assumed that the user-user links are stronger than user-destination one, i.e., $N\gg N_j$. Similarly, by applying the PS method, the received signal at the destination is splitted into two different signals through the PS block. Thus, the signals $Y_{EH}$ for EH and $Y_{ID}$ for ID are formulated as 
\begin{align}
&Y_{EH}=\sqrt{\rho}Y,\\
&Y_{ID}=\sqrt{1-\rho}Y+Z_{P},\\
&P_{EH}=\psi\Big(\rho\big(S+N\big)\Big),
\end{align}
where $S=|h_1|^2P_1+|h_2|^2P_2+2h_1h_2\sqrt{P_{u1}P_{u2}}$ refers to the received RF signal power,
$P_{EH}$ is the total harvested power,
 and $Z_p \sim \mathcal{CN}(0,N_p)$ denotes the PS signal processing noise.

\subsection{Maximum Departure Region and Optimal Power Allocation}
To obtain the maximum departure region with the set of powers $\boldsymbol{P}=\{P_{12},P_{21},P_{u1},P_{u2}\}$, a pair of achievable rates $(R_1,R_2)$ satisfies
\begin{align}
 R_1 &\leq \frac{1}{2}\log\left(1+\dfrac{|h_{12}|^2P_{12}}{N_2}\right),\label{eqc1}\\
 R_2 &\leq  \frac{1}{2}\log\left(1+\dfrac{|h_{21}|^2P_{21}}{N_1}\right),\label{eqc2}\\
 R_1+R_2 &\leq \frac{1}{2}\log\left(1+\dfrac{(1-\rho)S}{(1-\rho)N+N_p}\right),\label{eqc3}\\
   R_1+R_2 &\leq \varphi^{-1}\left(P_{EH}\right),\label{eqc4}\\
    \varphi_1(R_2) + P_1 &\leq P_{U1}, \label{eqc5}\\ 
 \varphi_2(R_1)+P_2&\leq P_{U2}. \label{eqc6}
\end{align}
The inequalities (\ref{eqc1})-(\ref{eqc3}) denote the achievable rate constraints at PS-SWIPT MAC with user cooperation and (\ref{eqc4})-(\ref{eqc6}) represent the decoding cost constraints at the receiver nodes. The decoding cost function $\varphi_1(.)$ and $\varphi_2(.)$ at users are defined similar to $\varphi(.)$ at the destination and the pair of $P_{U1}$ and $P_{U2}$ refers to the amount of power available at user 1 and user 2, respectively.\par
 In this section, we firstly characterize the MDRB by forming the weighted sum rate as 
$$R_\mu=\mu_1R_1 + \mu_2R_2 ,$$
 subject to all the constraints (\ref{eqc1})-(\ref{eqc6}), \cite{su2015cooperative}. As proven in\cite[Lemma~1]{arafa2016energy}, there exists an optimal policy to achieve the MDRB, where (\ref{eqc1}) and (\ref{eqc2}) hold with equality. Therefore, for any $\mu_1,\mu_2\geq0$, the optimization problem is formulated as 
\begin{align}
\hspace{-0.2cm}\mathbf{P}\mathbf{3}: \max _{\rho, \boldsymbol{P}} \hspace{0.2cm} & \mu_1g\left(bP_{12}\right)+\mu_2g\left(cP_{21}\right)  \label{os_general}\\
\text { {s.t.} }\hspace{0.2cm} &\varphi_1\big(g( cP_{21})\big)+P_{12}+P_{u1}\leq P_{U1} , \label{os1_general} \\
& \varphi_2\big(g(bP_{12})\big)+P_{21}+P_{u2}\leq P_{U2} , \label{os2_general}\\
&g\left(bP_{12}\right)+g(cP_{21}) \leq \varphi^{-1}\Big(\psi \big(\rho(S+N) \big)\Big),\label{os3_general}\\
& g\left(bP_{12}\right) + g\left(cP_{21}\right)\leq\  g\Big(\dfrac{(1-\rho)S}{(1-\rho)N+N_p}\Big),\label{os4_general}\\
& 0\leq\rho\leq1 ,\\
& P_{12},P_{21},P_{u1},P_{u2} \geq 0.
\end{align}
where $g(x)\triangleq \frac{1}{2} \log(1+x)$, $b=\frac{|h_{12}|^2}{N_2}$, and $c=\frac{|h_{21}|^2}{N_1}$.
\begin{lemma}\label{t1-general}
In the optimal solution of $\mathbf{P} \mathbf{3}$, all the constraints (\ref{os1_general})-(\ref{os4_general}) hold with equality. 
\end{lemma}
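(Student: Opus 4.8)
The plan is to argue by contradiction, following the perturbation technique used in the proofs of Lemmas~\ref{lm2} and \ref{SIC_sumrate_lemma}. Let $(\rho,\boldsymbol P)$ be an optimal point of $\mathbf{P}\mathbf{3}$ and suppose at least one of (\ref{os1_general})--(\ref{os4_general}) is strict; I will build a feasible perturbation that strictly increases the objective $\mu_1 g(bP_{12})+\mu_2 g(cP_{21})$. I will lean on monotonicity facts already recorded in the paper: $g$, $\varphi^{-1}$, $\varphi_1$, $\varphi_2$ and $\psi$ are non-decreasing; the right-hand side of (\ref{os3_general}) is non-decreasing in $\rho$ while that of (\ref{os4_general}) is strictly decreasing in $\rho$ (the same numerator--denominator trade-off used in the proof of Theorem~\ref{simultheorem}); $S=|h_1|^2P_1+|h_2|^2P_2+2h_1h_2\sqrt{P_{u1}P_{u2}}$ is strictly increasing in each of $P_{u1}$ and $P_{u2}$, so raising a cooperation power can only relax (\ref{os3_general}) and (\ref{os4_general}); and, for $\mu_1,\mu_2>0$, the objective is strictly increasing in each of $P_{12}$ and $P_{21}$, so any feasible move that raises one of them without lowering the other wins.

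First I would settle the two power-budget constraints. Suppose (\ref{os1_general}) is strict. Since $P_{u1}$ appears only in (\ref{os1_general}) and---through $S$---in (\ref{os3_general})--(\ref{os4_general}) with strictly positive marginal effect, I raise $P_{u1}$ by a small $\epsilon>0$: (\ref{os1_general}) stays feasible and (\ref{os3_general}), (\ref{os4_general}) become strict. Now I nudge $P_{12}$ (or $P_{21}$) upward: feasibility of the freshly relaxed (\ref{os3_general})--(\ref{os4_general}) and of the still-slack (\ref{os1_general}) is clear for a small step, and if the coupled constraint (\ref{os2_general}) happens to be tight I simultaneously lower $P_{u2}$ just enough to absorb the increase of $\varphi_2(g(bP_{12}))$ (choosing every step small enough that (\ref{os3_general})--(\ref{os4_general}) remain strict), which restores feasibility there. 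The resulting move strictly increases the objective---a contradiction; the case of (\ref{os2_general}) strict is symmetric.

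With (\ref{os1_general})--(\ref{os2_general}) now known tight, I turn to (\ref{os3_general}) and (\ref{os4_general}). Their left-hand sides are identical and their right-hand sides are monotone in $\rho$ in opposite directions, so if one of them is strict I perturb $\rho$ alone---decrease it when (\ref{os4_general}) is the binding rate bound, increase it when (\ref{os3_general}) is, either direction when both are strict---by a small amount that makes \emph{both} (\ref{os3_general}) and (\ref{os4_general}) strict, while (\ref{os1_general})--(\ref{os2_general}) and the objective are untouched. Then, as in the previous paragraph, I reclaim a little room from the now-tight budgets by lowering $P_{u1}$ or $P_{u2}$ and spend it on $P_{12}$ or $P_{21}$, raising the objective---again a contradiction. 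Hence all of (\ref{os1_general})--(\ref{os4_general}) hold with equality.

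The step I expect to be the genuine obstacle is the coupling bookkeeping inside the perturbation: $P_{12}$ and $P_{21}$ sit in the budget constraints both linearly and inside the cost terms $\varphi_2(g(bP_{12}))$, $\varphi_1(g(cP_{21}))$, and also inside $S$, so the very move that improves the objective and relaxes (\ref{os3_general})--(\ref{os4_general}) tightens the other user's budget. Turning the sketch into a proof therefore comes down to exhibiting a perturbation direction $(d\rho,dP_{12},dP_{21},dP_{u1},dP_{u2})$ along which every active constraint has non-positive directional derivative while the objective's derivative is strictly positive---a short complementary-slackness/KKT verification---and it uses that the cooperation powers $P_{u1},P_{u2}$ are strictly positive, so that they can be traded for budget room; in the degenerate regime $P_{u1}=P_{u2}=0$ the problem collapses to the classical PS-SWIPT MAC of Section~II, already covered. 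These are the points I would write out carefully.
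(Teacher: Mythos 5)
Your argument is correct and takes essentially the same route as the paper's own proof: a contradiction-by-perturbation using the monotonicity of $g$, $\varphi$, $\varphi_1$, $\varphi_2$, $\psi$, first forcing (\ref{os1_general})--(\ref{os2_general}) tight by converting budget slack into cooperation power and trading cooperation power for fresh-message power, then forcing (\ref{os3_general})--(\ref{os4_general}) tight via a $\rho$-perturbation followed by lowering $P_{u1},P_{u2}$ to fund a small increase of $P_{12},P_{21}$, with the same $P_{u1},P_{u2}>0$ caveat the paper states in the remark after the lemma. The coupling issue you flag is handled in the paper exactly as you propose (decrease both $P_{u1}$ and $P_{u2}$ by $\epsilon/M$ and increase the fresh powers by a higher-order small amount so every constraint keeps its slack), so there is no genuine gap.
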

\begin{proof}
On the user side, suppose some energy is left at the first user in the optimal solution and thus, (\ref{os1_general}) holds with strict inequality. So, we can increase $P_{u1}$ with a small enough value $\epsilon>0$ such that (\ref{os1_general}) is still  satisfied with strict inequality. Noting that the functions $\varphi(.)$, $\varphi_1(.)$, $\varphi_2(.)$, $\psi(.)$, and $g(.)$ are all increasing and $S$ includes an addition term as a function of the multiplication of $P_{u1}$ and $P_{u2}$, (\ref{os3_general}) and (\ref{os4_general}) are also satisfied with strict inequality. 
Then, there exist a large enough value $M>0$ to add and subtract $P_{21}$ and $P_{u2}$ with $\epsilon/M$, respectively,  such that $S$ increases and the left term of (\ref{os2_general}) is kept unchanged, while (\ref{os1_general}), (\ref{os3_general}), and (\ref{os4_general}) are not violated. This act increases $P_{21}$, and thus, the objective function which contradicts the optimality. Therefore, (\ref{os1_general}) is satisfied with equality. Similarly, we can prove that (\ref{os2_general}) is also satisfied with equality. On the destination side, we assume that (\ref{os3_general}) is satisfied with strict inequality. Then, for a small enough $\epsilon$ and large enough $M$, we can decrease $\rho$, $P_{u1}$, and $P_{u2}$ with  $\epsilon$, $\epsilon/M$, and $\epsilon/M$, respectively, such that the right term of (\ref{os4_general}) increases and (\ref{os3_general}) is still satisfied with strict inequality. Now, all the constraints (\ref{os1_general})-(\ref{os4_general}) are satisfied with strict inequality. Then, there is large enough $N>0$ such that adding the both $P_{21}$ and $P_{12}$ with $\epsilon/MN$ does not violate non of the constraints (\ref{os1_general})-(\ref{os4_general}), showing the contradiction. Therefore, (\ref{os3_general}) is also satisfied with equality in the optimal solution. Finally, assuming that (\ref{os4_general}) is satisfied with strict inequality, we increase $\rho$ and decrease $P_{u1}$ and $P_{u2}$ with $\epsilon$,  $\epsilon/M$, and $\epsilon/M$, respectively to show the contradiction. As a result, (\ref{os4_general}) is also satisfied with equality in the optimal solution of  $\mathbf{P} \mathbf{3}$.
\end{proof}
It is worth noting that decreasing $P_{u1}$ and $P_{u2}$ in the proof of Lemma \ref{t1-general} is feasible as long as 
$P_{u1},P_{u2}> 0$. Otherwise, $P_{u1}^*=P_{u2}^*= 0$ in the optimal solution of $\mathbf{P} \mathbf{3}$, or in other words, it is not optimal to employ cooperation anymore. As a result of  Lemma \ref{t1-general}, we solve
$\mathbf{P} \mathbf{3}$ using the non-linear system of four equations in (\ref{os1_general})-(\ref{os4_general}) and four unknowns in $\boldsymbol{P}$ as a function of $\rho$. Then, we equate the derivative of the objective function in terms of $\rho$ with zero, yielding the the optimal PS factor $\rho^*$. Accordingly, the optimal set of powers $P_{12}^*$, $P_{21}^*$,$P_{u1}^*$, and $P_{u2}^*$ are derived. However, to find a closed form solution, we reduce $\mathbf{P} \mathbf{3}$ into
\begin{align}
\hspace{-0.2cm}\mathbf{P}\mathbf{4}: \max _{\rho, \boldsymbol{P}} \hspace{0.2cm} & \mu_1g\left(bP_{12}\right)+\mu_2g\left(cP_{21}\right)  \label{os}\\
\text { {s.t.} }\hspace{0.2cm} &\beta cP_{21}+P_{12}+P_{u1}\leq P_{U1} , \label{os1} \\
& \beta bP_{12}+P_{21}+P_{u2}\leq P_{U2} , \label{os2}\\
&\beta\Big(bP_{12}+cP_{21}+bcP_{12}P_{21}\Big) \leq \eta\rho(S+N) ,\label{os3}\\
& g\left(bP_{12}\right) + g\left(cP_{21}\right)\leq\  g\Big(\dfrac{(1-\rho)S}{(1-\rho)N+N_p}\Big),\label{os4}\\
& 0\leq\rho\leq1 ,\\
& P_{12},P_{21},P_{u1},P_{u2} \geq 0.
\end{align}wherein the exponential functions $\varphi(.)=\varphi_1(.)=\varphi_2(.)=\beta(2^{2R}-1)$ and a linear EH model $\psi(P_{in})=\eta P_{in}$ as in described in Remark \ref{remark:psi_linearity} and \ref{remark_linear_simultaneous} are assumed.
Using Lemma (\ref{t1-general}), we solve $\mathbf{P}\mathbf{4}$ by maximizing the objective function subject to (\ref{os1})-(\ref{os4}) satisfied with equality. So,
\begin{align}
&P_{12}= P_{U1} -P_{u1}-\beta c P_{21},\label{p1}\\
&P_{21}= P_{U2}-P_{u2}-\beta bP_{12}.\label{p2}
\end{align} 
By substituting $P_{21}$ and $P_{12}$ in the right terms of (\ref{p1}) and (\ref{p2}), we have
\begin{align}
&P_{12}= \frac{P_{U1}-\beta cP_{U2}-P_{u1}+\beta c P_{u2}}{1-\beta^2bc},\label{p3}\\
&P_{21}= \frac{P_{U2}-\beta bP_{U1}-P_{u2}+\beta b P_{u1}}{1-\beta^2bc}.\label{p4}
\end{align} 
Then, the objective function of $\mathbf{P} \mathbf{3}$ can be rewritten as  
 \begin{equation}\small
\mu_1g\left(b\frac{A-P_{u1}+\beta cP_{u2}}{1-\beta^2bc}\right)+  \mu_2g\left(c\frac{B-P_{u2}+\beta b P_{u1}}{1-\beta^2bc}\right),\label{obj}
\end{equation}
where $A=P_{U1}-\beta cP_{U2}$  and $B=P_{U2}-\beta b P_{U1}$. Now, by taking derivative of (\ref{obj}) with respect to $P_{u1}$ and $P_{u2}$, we have
\begin{align}
bc\Big(\mu_1+\mu_2\beta^2bc\Big)\tilde{P}_{u2} - \beta b^2c\Big(\mu_1+\mu_2\Big)\tilde{P}_{u1}=C_1,\label{pu3}\\
bc\Big(\mu_2+\mu_1\beta^2bc\Big)\tilde{P}_{u1}-\beta bc^2\Big(\mu_1+\mu_2\Big)\tilde{P}_{u2}=C_2, \label{pu4}
\end{align}    
where
\begin{align*}
C_1=\mu_1b\Big(1-\beta^2bc+Bc\Big) - \mu_2  \beta bc\Big(1-\beta^2bc+Ab\Big), \\
 C_2=\mu_2c\Big(1-\beta^2bc+Ab\Big) - \mu_1 \beta bc\Big(1-\beta^2bc+Bc\Big).
\end{align*}

\begin{theorem}\label{cooperative_linear_solution}
In $\mathbf{P}\mathbf{4}$, the optimal points ($\tilde{P}_{u1}$,$\tilde{P}_{u2}$)
and the optimal PS factor $\tilde{\rho}$ can be written as 
\begin{align}\label{P_u1&P_u2}
    \hspace{-2cm}\begin{pmatrix} 
    \tilde{P}_{u1}\\
    \tilde{P}_{u2}
    \end{pmatrix}
=\frac{1}{ED-FC}
\begin{pmatrix}
F&D\\
E&C
\end{pmatrix}
\begin{pmatrix}
C_1\\
C_2
\end{pmatrix},
\end{align}
\begin{align}
  \hspace{-2.5cm}\tilde{\rho}=\frac{\beta\left(b\tilde{P}_{12}+c\tilde{P}_{21}+bc\tilde{P}_{12}\tilde{P}_{21}\right)}{\eta(\tilde{S}+N)},\label{rho1}
\end{align}
where
\begin{align*}
&C=\beta b^2c\Big(\mu_1+\mu_2\Big),\hspace{0.1cm} &&D=bc\Big(\mu_1+\mu_2\beta^2bc\Big),\\
&E=bc\Big(\mu_2+\mu_1\beta^2bc\Big),\hspace{0.1cm} &&F=\beta bc^2\Big(\mu_1+\mu_2\Big),
\end{align*}
\begin{align*}
    \hat{S}=|h_1|^2(\tilde{P}_{12}+\tilde{P}_{u1})+|h_2|^2(\tilde{P}_{21}+\tilde{P}_{u2})+2h_1h_2\sqrt{\tilde{P}_{u1}\tilde{P}_{u2}}.
\end{align*}
Then, the optimal values ({$\tilde{P}_{12}$,$\tilde{P}_{21}$})  are derived by substituting $\tilde{P}_{u1}$ and $\tilde{P}_{u2}$ into  (\ref{p3}) and (\ref{p4}).
\end{theorem}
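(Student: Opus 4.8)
The plan is to invoke Lemma~\ref{t1-general} to turn all four inequality constraints of $\mathbf{P}\mathbf{4}$ into equalities, use two of them to eliminate $P_{12}$ and $P_{21}$, reduce the objective to a concave function of the two cooperation powers $(P_{u1},P_{u2})$, and then read off the optimum from a $2\times 2$ linear stationarity system, recovering $\tilde\rho$ at the end.

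First, by Lemma~\ref{t1-general} the constraints (\ref{os1})--(\ref{os4}) hold with equality at the optimum of $\mathbf{P}\mathbf{4}$. The equality forms of (\ref{os1}) and (\ref{os2}) are the linear relations (\ref{p1})--(\ref{p2}) in $(P_{12},P_{21})$; solving this pair (its determinant is $1-\beta^2bc$, which we take to be strictly positive, since otherwise cooperative signalling is infeasible) yields the affine expressions (\ref{p3})--(\ref{p4}) for $P_{12}$ and $P_{21}$ in terms of $(P_{u1},P_{u2})$. Substituting these into the objective of $\mathbf{P}\mathbf{4}$ gives (\ref{obj}), which depends on $(P_{u1},P_{u2})$ alone: the cross term $2h_1h_2\sqrt{P_{u1}P_{u2}}$ inside $S$, and the PS factor $\rho$, are both absent from it. Hence $\rho$ plays no part in the maximization and will be recovered at the end from the remaining equality constraint.

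Second, (\ref{obj}) is jointly concave in $(P_{u1},P_{u2})$, being a sum of nonnegative multiples $\mu_i$ of the concave increasing function $g(x)=\frac{1}{2}\log(1+x)$ composed with affine maps. Therefore any stationary point of (\ref{obj}) lying in the feasible box is its global maximizer. Setting the partial derivatives of (\ref{obj}) with respect to $P_{u1}$ and $P_{u2}$ equal to zero and clearing the $g'$ denominators gives exactly the linear system (\ref{pu3})--(\ref{pu4}), i.e. $\begin{pmatrix}-C & D\\ E & -F\end{pmatrix}\begin{pmatrix}\tilde P_{u1}\\ \tilde P_{u2}\end{pmatrix}=\begin{pmatrix}C_1\\ C_2\end{pmatrix}$ with $C,D,E,F,C_1,C_2$ as stated; inverting the coefficient matrix (whose determinant is $CF-DE=-(ED-FC)$) yields (\ref{P_u1&P_u2}). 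Back-substituting $(\tilde P_{u1},\tilde P_{u2})$ into (\ref{p3})--(\ref{p4}) gives $(\tilde P_{12},\tilde P_{21})$, and then $\hat S$ (the $\tilde S$ appearing in (\ref{rho1})) is evaluated from its definition. Finally, the equality version of (\ref{os3}), namely $\beta\big(b\tilde P_{12}+c\tilde P_{21}+bc\tilde P_{12}\tilde P_{21}\big)=\eta\tilde\rho(\hat S+N)$, solved for $\tilde\rho$, gives (\ref{rho1}).

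The main obstacle, I expect, is not the algebra but the feasibility/decoupling step: one must show that the unconstrained maximizer of (\ref{obj}) is actually admissible for the full $\mathbf{P}\mathbf{4}$. Concretely, for the $(P_{12},P_{21})$ produced above one needs a single $\rho\in[0,1]$ satisfying both (\ref{os3}) and (\ref{os4}) with equality --- this is where Lemma~\ref{t1-general} together with the opposite monotonicities of the two $\rho$-dependent right-hand sides ($\eta\rho(S+N)$ increasing, $g\big((1-\rho)S/((1-\rho)N+N_p)\big)$ decreasing) is invoked --- and one needs $\tilde P_{12},\tilde P_{21},\tilde P_{u1},\tilde P_{u2}\ge 0$ and $\tilde\rho\in[0,1]$. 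These positivity/range conditions hold only when the power budgets $P_{U1},P_{U2}$ and the parameters $\beta,\eta,N_p$ lie in a suitable range relative to the channel gains $b,c,|h_1|^2,|h_2|^2$; outside that range the boundary solutions $\tilde P_{u1}=0$ or $\tilde P_{u2}=0$ (no cooperation) must be handled separately. Once admissibility is secured, the concavity of (\ref{obj}) established above immediately upgrades the critical point to the global optimum.
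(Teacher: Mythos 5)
Your proposal follows essentially the same route as the paper: invoke Lemma~\ref{t1-general} to activate (\ref{os1})--(\ref{os4}), eliminate $P_{12},P_{21}$ via (\ref{p1})--(\ref{p4}), reduce the objective to (\ref{obj}), obtain the linear stationarity system (\ref{pu3})--(\ref{pu4}) and invert it to get (\ref{P_u1&P_u2}), then recover $\tilde{\rho}$ from the equality form of (\ref{os3}). Your added observations (joint concavity of (\ref{obj}) in $(P_{u1},P_{u2})$, the determinant condition $\beta^2bc\neq 1$, and the nonnegativity/feasibility caveats) are correct refinements of points the paper only handles implicitly or relegates to the remark following the theorem.
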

\begin{proof}
The optimal powers ($\tilde{P}_{u1}$,$\tilde{P}_{u2}$) are derived by solving the system of linear equations (\ref{pu3}) and (\ref{pu4}). Then, the optimal PS factor $\tilde{\rho}$ is obtained  by setting  (\ref{os3}) with equality.
\end{proof}
\begin{remark}
The optimal solution in  Theorem \ref{cooperative_linear_solution} is valid only for   $\tilde{P}_{12},\tilde{P}_{21},\tilde{P}_{u1},\tilde{P}_{u2}\geq 0$. Otherwise, 
it is not optimal to employ cooperation anymore. In that case, the classical PS-SWIPT is the recommended framework to enhance the sum rate. Furthermore, if $\beta^2 bc=1$, then $ED=FC$ and the optimal  powers  ($\tilde{P}_{u1}$, $\tilde{P}_{u2}$, $\tilde{P}_{12}$, $\tilde{P}_{21}$) are not unique.
\end{remark}

\begin{figure}[t]
\centering
\subfloat[$\varphi(R)=\beta(2^{2R}-1)$]{\includegraphics[scale=0.335]{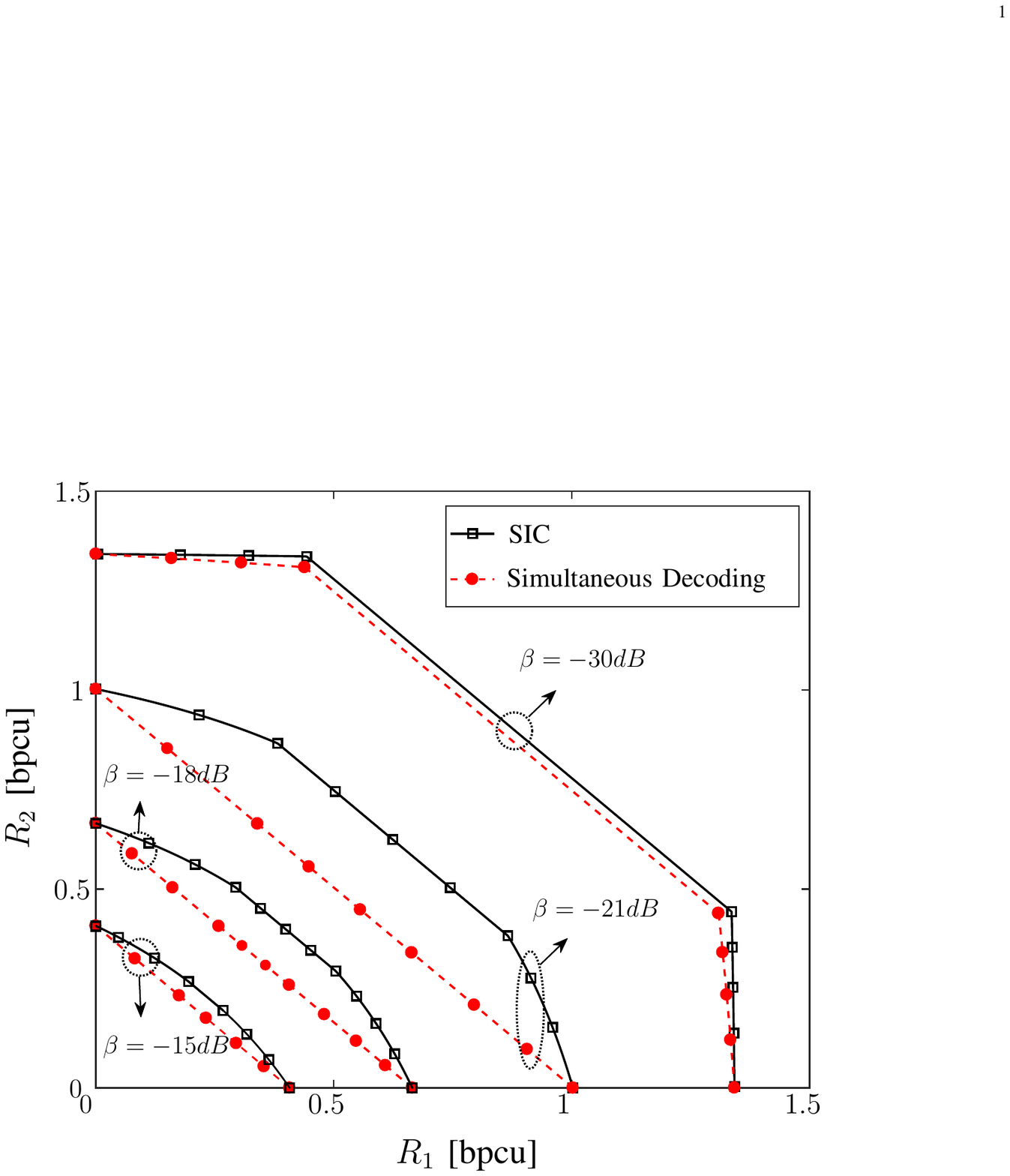}\label{mdrb_convex}}
	\hfil \hspace{0cm}
\subfloat[ $\varphi(R)=\beta\log(2R+1)$]{\includegraphics[scale=0.335]{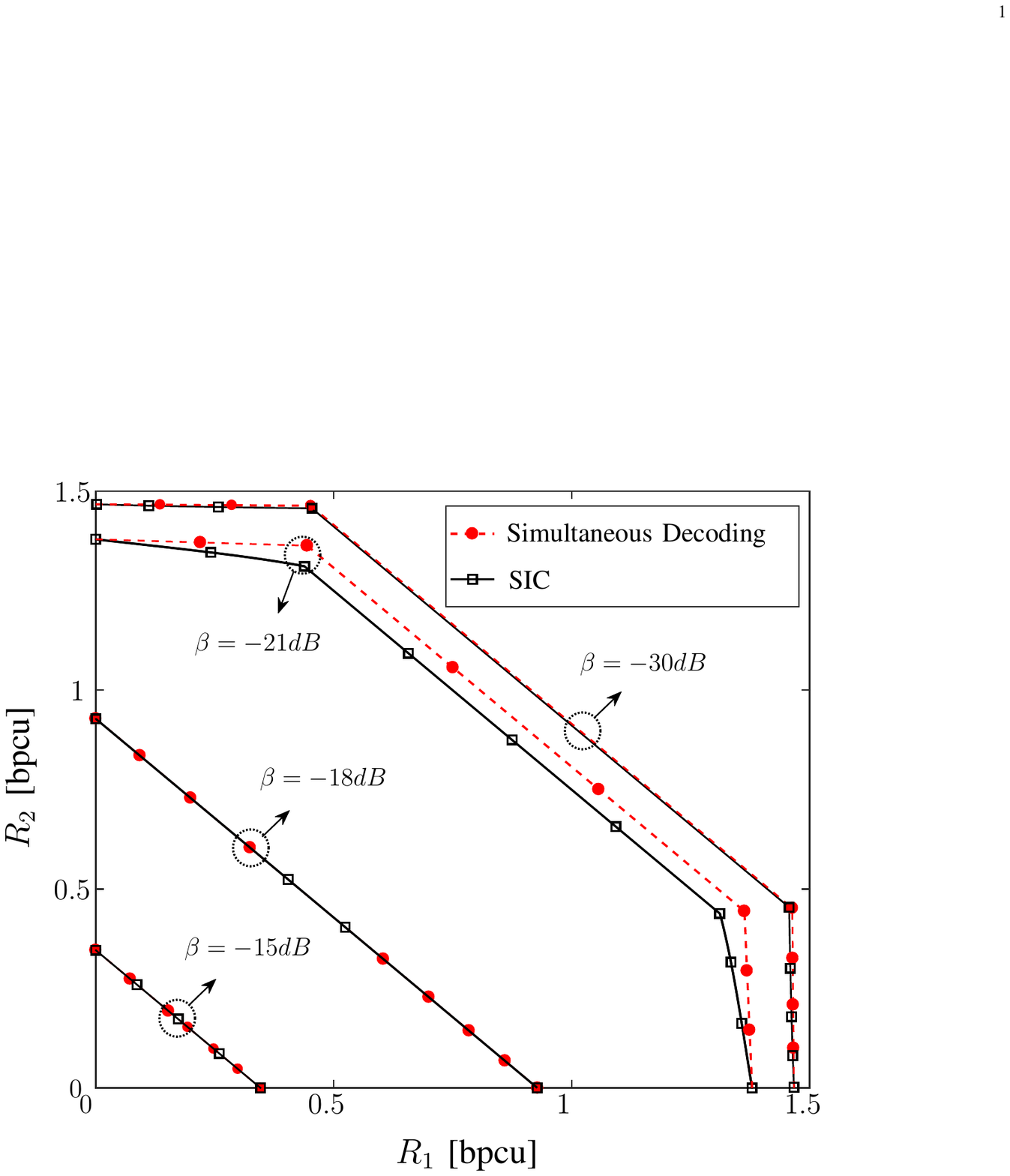}\label{mdrb_concave}}\\
\subfloat[$\varphi(R)=\beta(2R)$]{\includegraphics[scale=0.335]{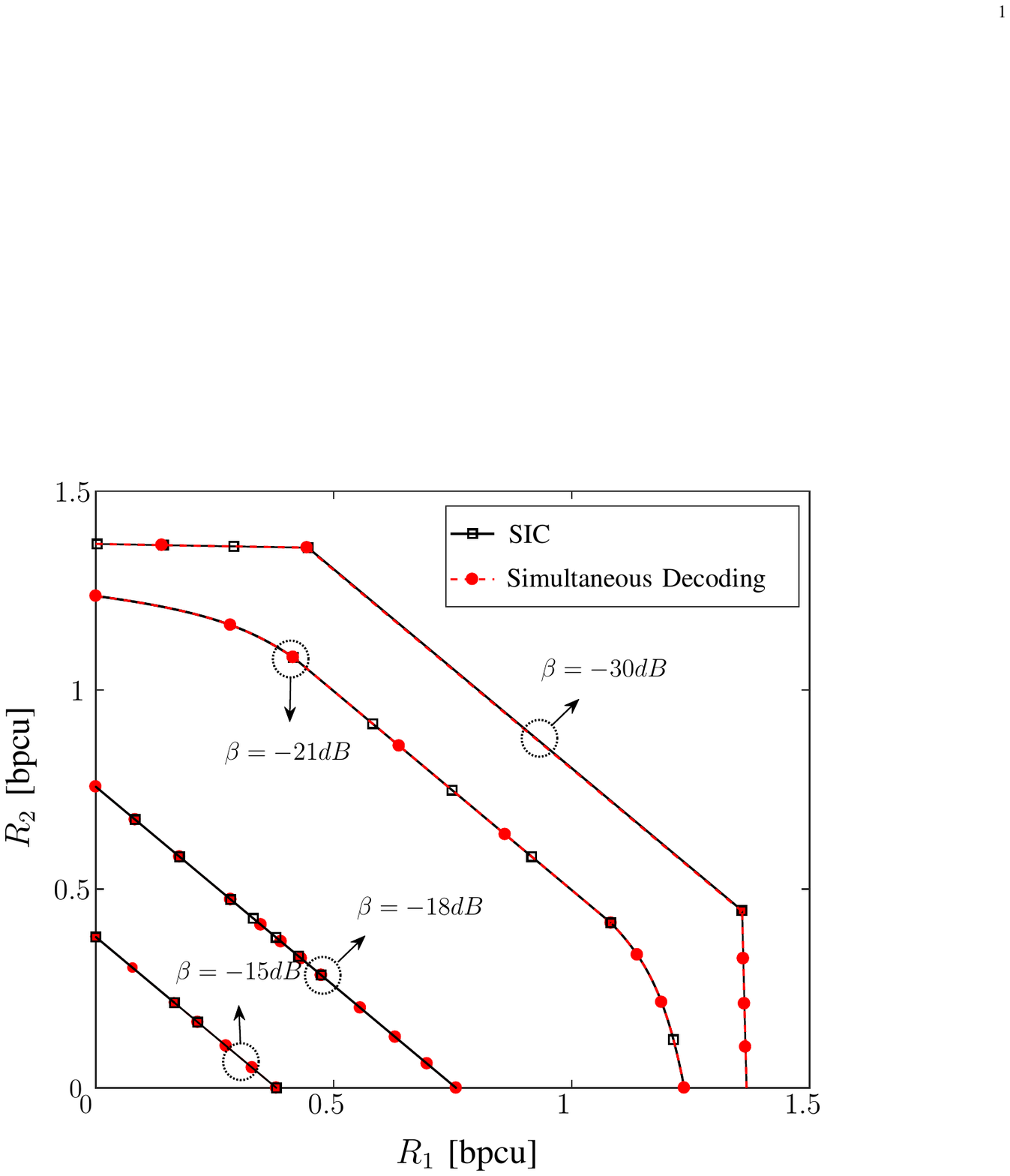}\label{mdrb_linear}}
	\hfil \hspace{0cm}
\subfloat[$\varphi(R)=\varphi_0$]{\includegraphics[scale=0.335]{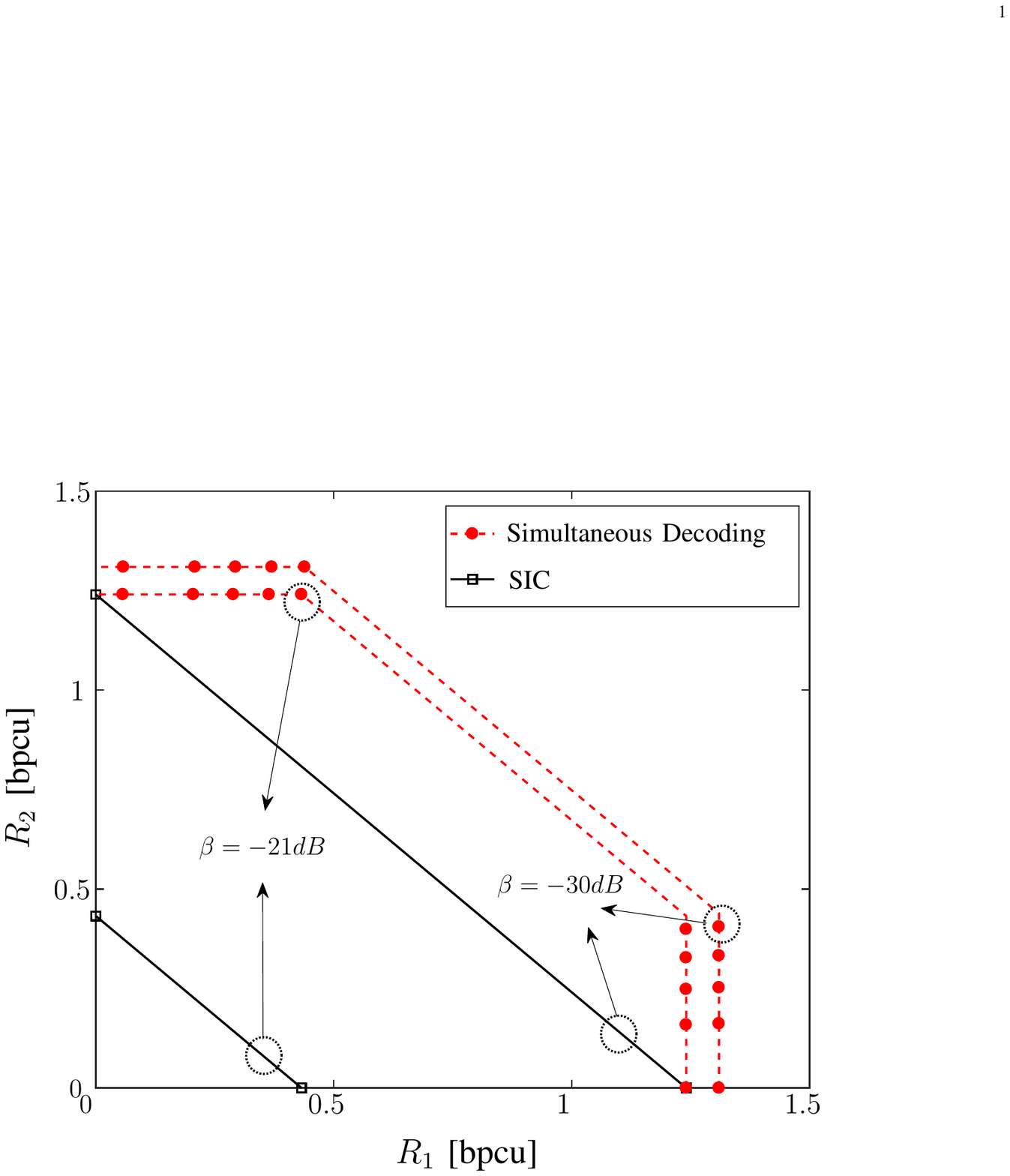}\label{mdrb_constant}}
   \caption{Maximum departure region of the classical PS-SWIPT MAC for simultaneous and SIC decoding schemes and four decoding cost forms.}
  \label{mdrbs} 
\end{figure}

\section{Numerical Results and Discussions}
In this section, we evaluate the performance of PS-SWIPT MAC systems under a non-linear EH model and decoding cost constraints. Specifically, to gather a practical overview of the systems, we consider the non-linear function of the EH rectifier.
To this end, we set the circuity parameters of the EH rectifier in (\ref{rectifier:q1q2}) as $q_1=1500$ and $q_2=0.0022$ with $P_{max}^{DC}=24$mW (see \cite{boshkovska2017max,ma2019generic,boshkovska2016power,xiong2017rate,boshkovska2015practical}).
Furthermore, we assume that the users, with power sources $P_1=P_2=0.5$W, are located at $d_1=d_2=3$m from the destination and the channel gains are defined by $h_1=d_1^{-\alpha}$ and $h_2=d_2^{-\alpha}$, with the propagation exponent $\alpha=2$. Finally, as described in Remark \ref{remark_linear_simultaneous}, we apply the worst case in practical PS-SWIPT by setting $N_p=-30$dB and $N=-60$dB to satisfy $N_p\gg N$. 
\par

Fig. \ref{mdrbs} compares the maximum departure region of the classical PS-SWIPT MAC for simultaneous decoding and SIC schemes with representative decoding cost functions: convex, concave, linear, and constant ones.  Note that the constant decoding cost function might be interpreted as the conventional \emph{signal processing cost} in the literature irrespective of the incoming rate \cite{arafa2017energy} and\cite{orhan2014energy}. As observed in Fig. \ref{mdrb_convex}, the SIC scheme outperforms the simultaneous one with the convex decoding cost, as a smaller amount of power is consumed for decoding the same pair of rates. 
While setting a concave decoding cost function, Fig. \ref{mdrb_concave} indicates that the simultaneous decoding outperforms  the SIC scheme.
However, Fig. \ref{mdrb_linear} shows that both the decoding schemes display the same performance with linear decoding cost function. Finally, setting constant values for the decoding cost as shown in Fig. \ref{mdrb_constant}, the cost for the SIC scheme is doubled compared to the simultaneous one, due to the sequential decoding manner in SIC. 
Here, if $\varphi_0>\psi(a)/2\simeq P_{max}^{DC}/2= 12$mW, the rate region significantly shrinks in SIC as the destination carries enough power to decode only one of the user's messages. Moreover, setting $\varphi_0>\psi(a)\simeq P_{max}^{DC}=24$mW results in an empty rate region for both the decoding schemes, as the EH block lacks enough power for the decoding process.

\begin{figure}[t!]
	\centering
\subfloat[Simultaneous Decoding]{\includegraphics[scale=0.335]{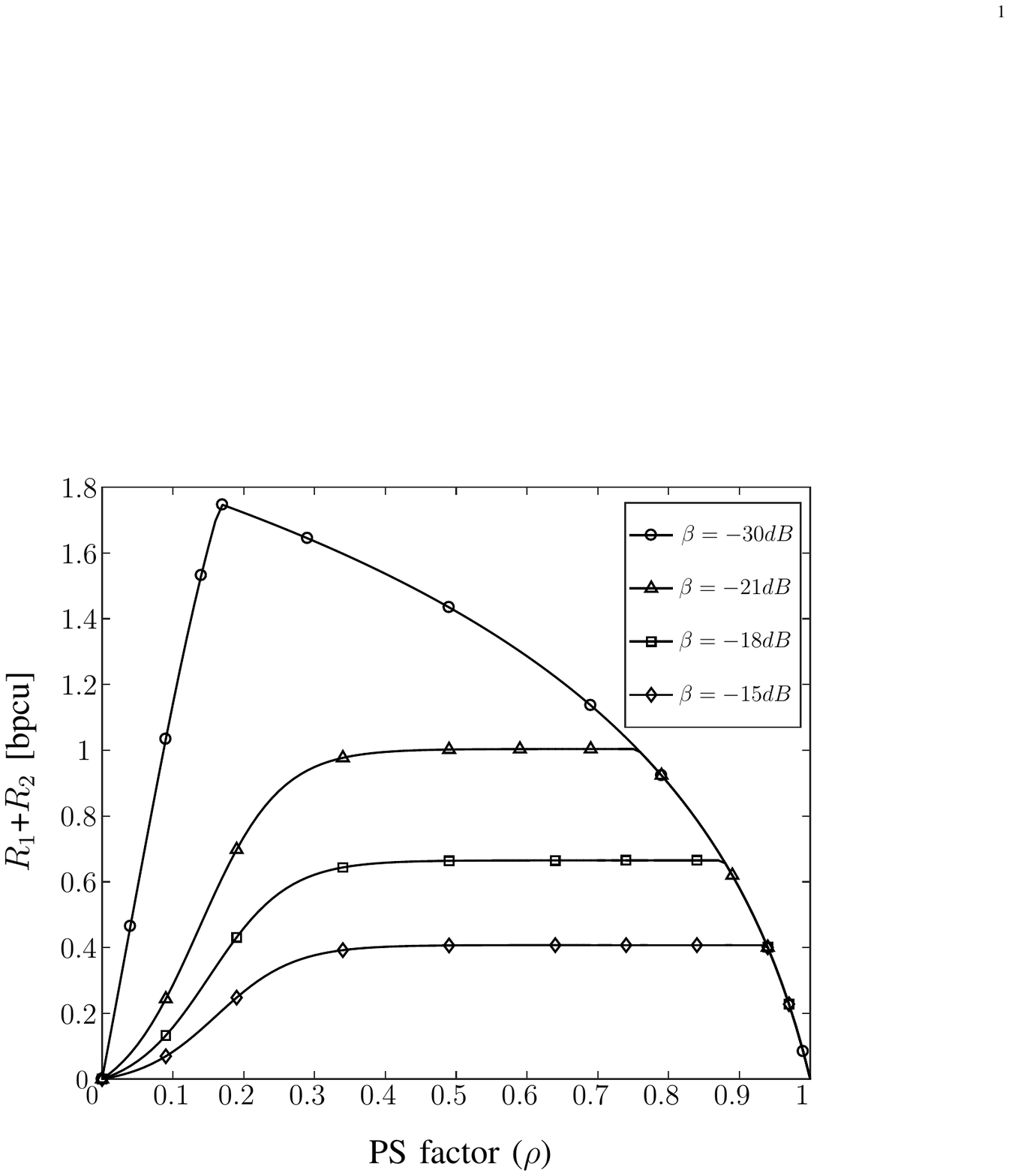}\label{rho_sd_convex}}
	\hfil \hspace{0cm}
\subfloat[SIC]{\includegraphics[scale=0.335]{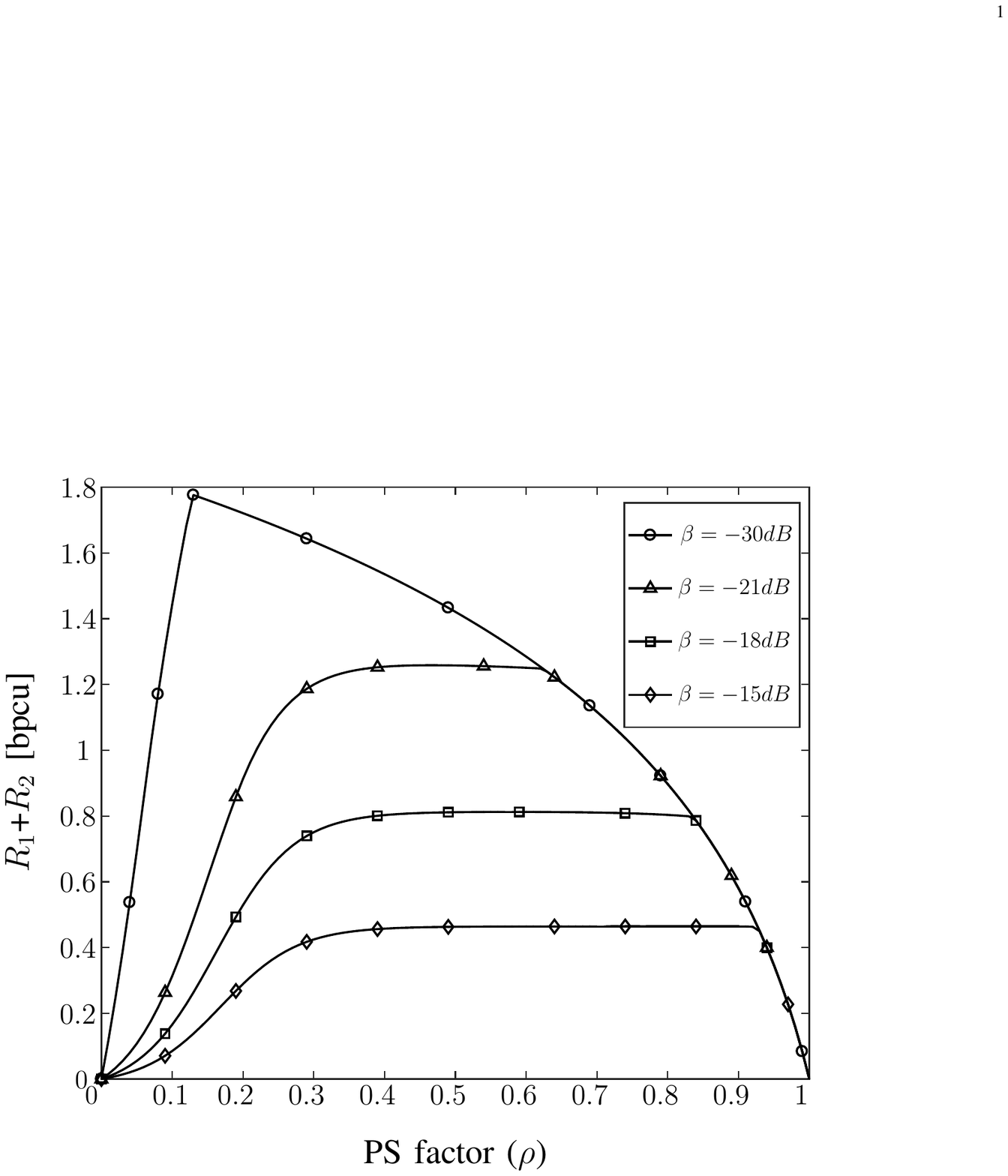}\label{rho_sic_convex}}
	\caption{ Maximum achievable sum rate in terms of the PS factor in Classical PS-SWIPT MAC with the decoding cost function $\varphi(R)=\beta(2^{2R}-1)$.}
	\label{rho_sumrate}
	
\end{figure}

 Fig. \ref{rho_sumrate} illustrates the maximum achievable sum rate in terms of the PS factor in the Classical PS-SWIPT MAC system with the convex decoding cost function for both the decoding schemes. It is observed that increasing $\rho$ saturates 
 the EH rectifier for a large $\beta$, which results in an upper-bound maximum achievable sum rate at which the bound is decreasing by increasing $\beta$ in both the decoding schemes. Moreover, in case the decoding factor is as small as $\beta=-30dB$, the necessary power for decoding is much less than $\psi(a)\simeq P_{max}^{DC}$. As a result, the EH rectifier approximately performs within the linear functionality.
 \par

Finally, Fig. \ref{mdrb_sdcoop} investigates the effect of the  user-user channel gain values and $\beta$  in the performance of the classical PS-SWIPT MAC and PS-SWIPT MAC with cooperation. In the cooperation case, we set the user-user channel parameters as $N_1=N_2=N$ and $h_{12}=h_{21}=h_u$, and the channel between the users and destination similar to the classical case. Further, we assume $\varphi_1(R)=\varphi_2(R)=\varphi(R)$ as the convex decoding cost function. It is observed that a large $\beta$ suppresses the maximum departure region in both the classical and cooperation systems irrespective of the channel gain. Therefore, although the cooperation outperforms the classical system for $h_u\geq 0.004$, this superiority fades as $\beta$ increases such that by setting $\beta=-21$dB, the system does not benefit from the cooperation among the users anymore. On the other hand, if the user-user channel gain drops to a low value as $h_u< 0.004$,  the classical PS-SWIPT MAC outperforms the cooperation system, regardless of $\beta$. 

\begin{figure}[t]
\centering
\subfloat[$\beta=-30 dB$]{\includegraphics[scale=0.335]{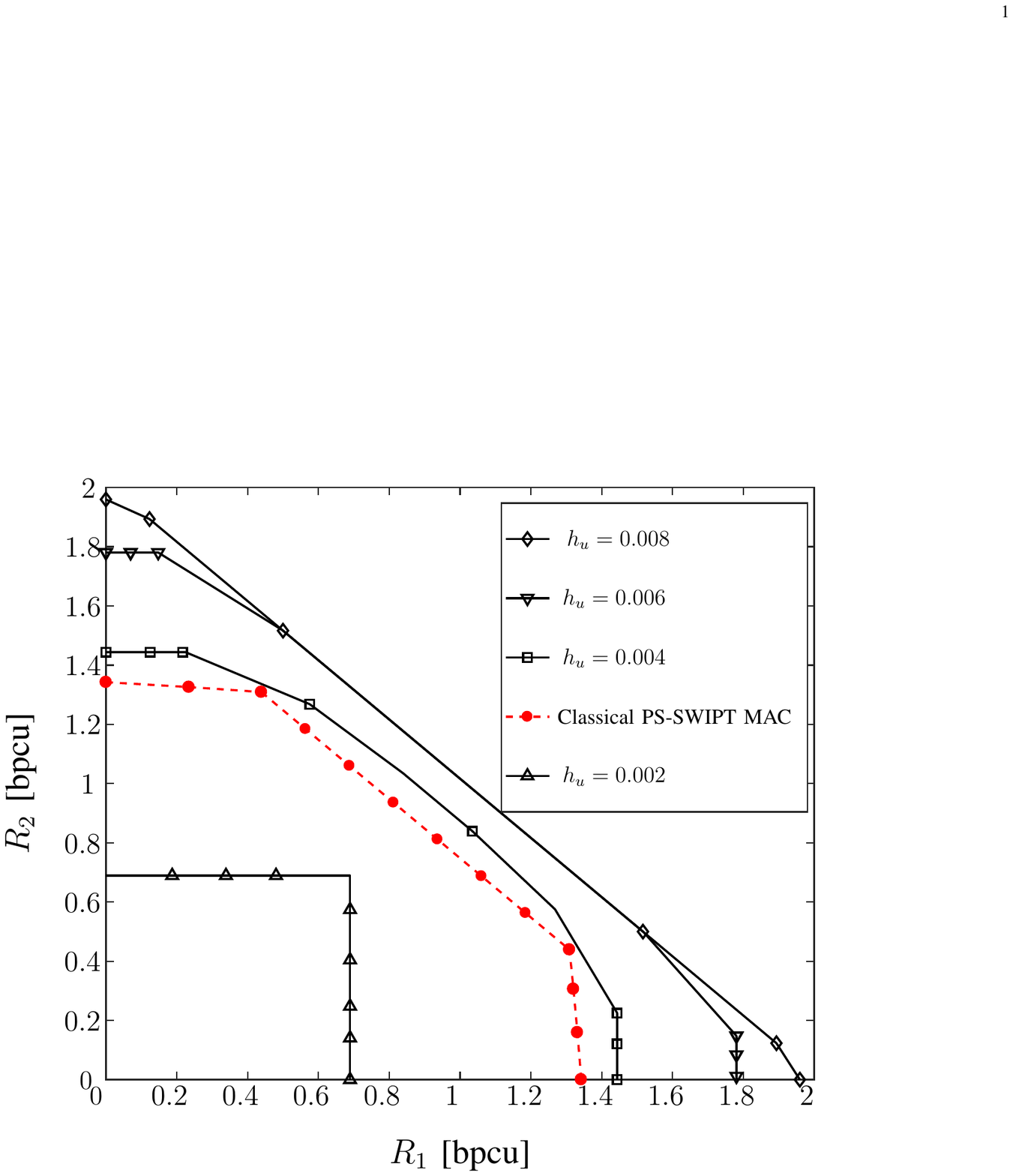}\label{mdrb_sdcoop_h30}}
	\hfil \hspace{0cm}
\subfloat[$\beta=-27 dB$]{\includegraphics[scale=0.335]{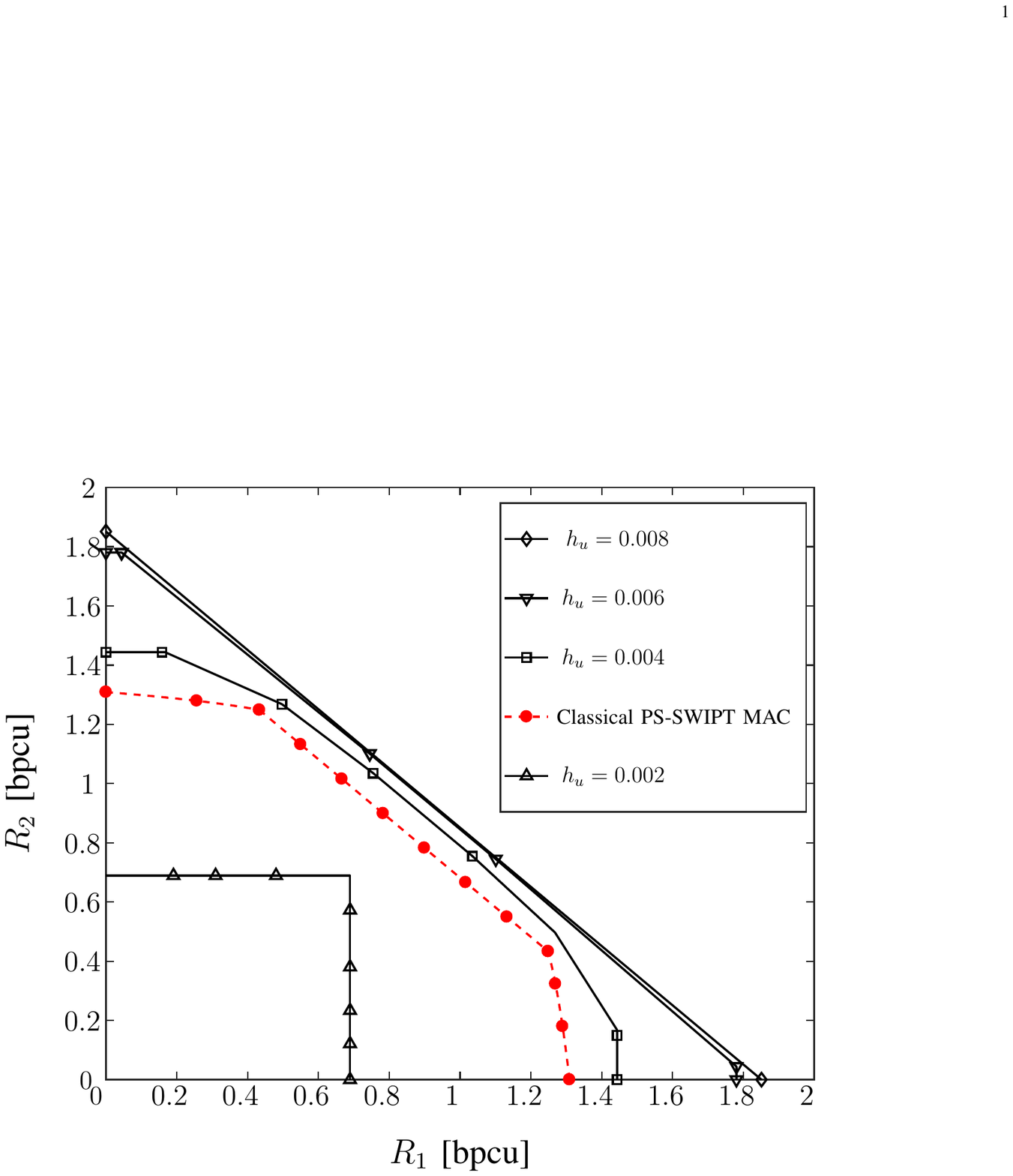}\label{mdrb_sdcoop_h27}}\\
\subfloat[$\beta=-24 dB$]{\includegraphics[scale=0.335]{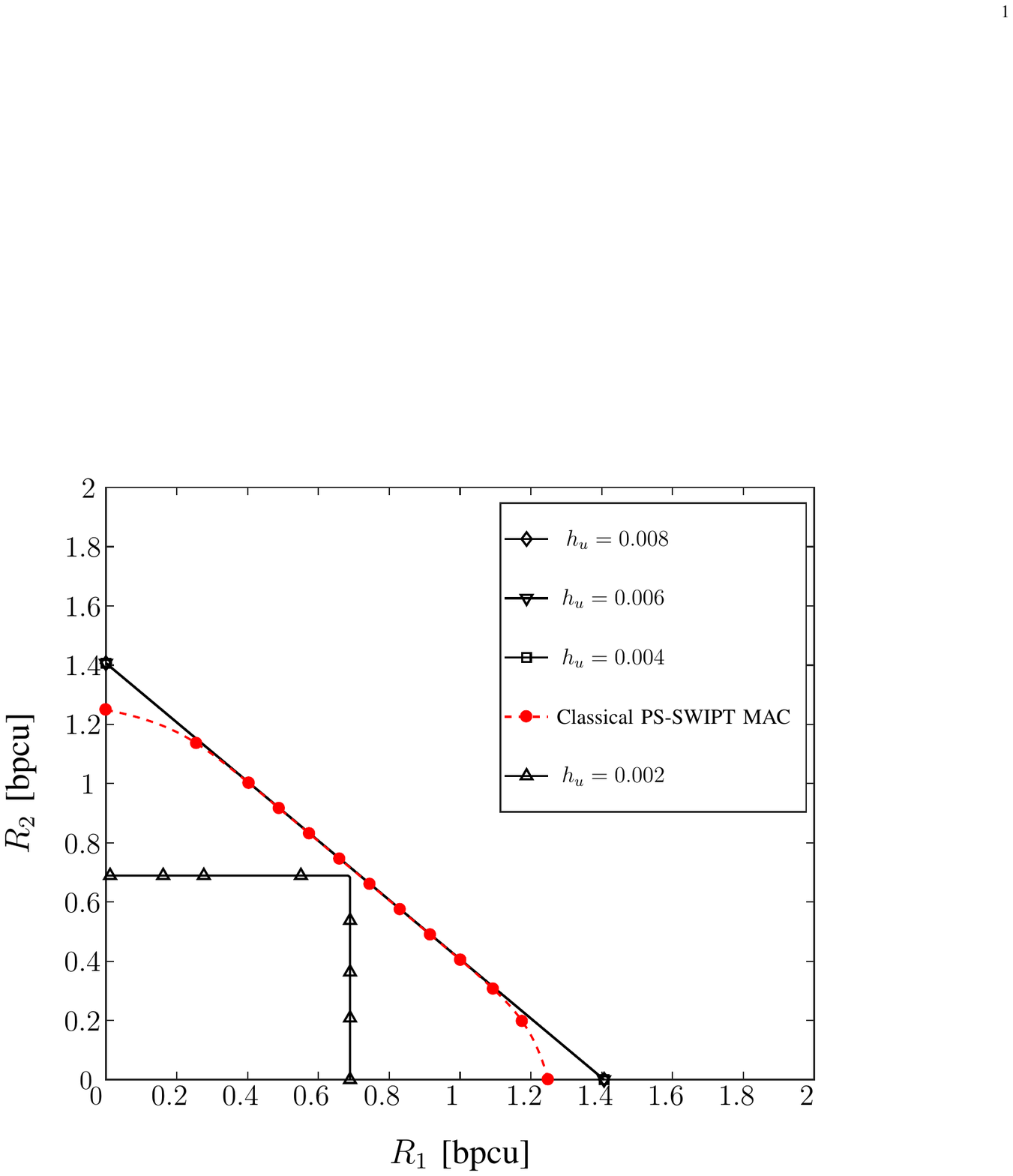}\label{mdrb_sdcoop_h24}}
	\hfil \hspace{0cm}
\subfloat[$\beta=-21 dB$]{\includegraphics[scale=0.335]{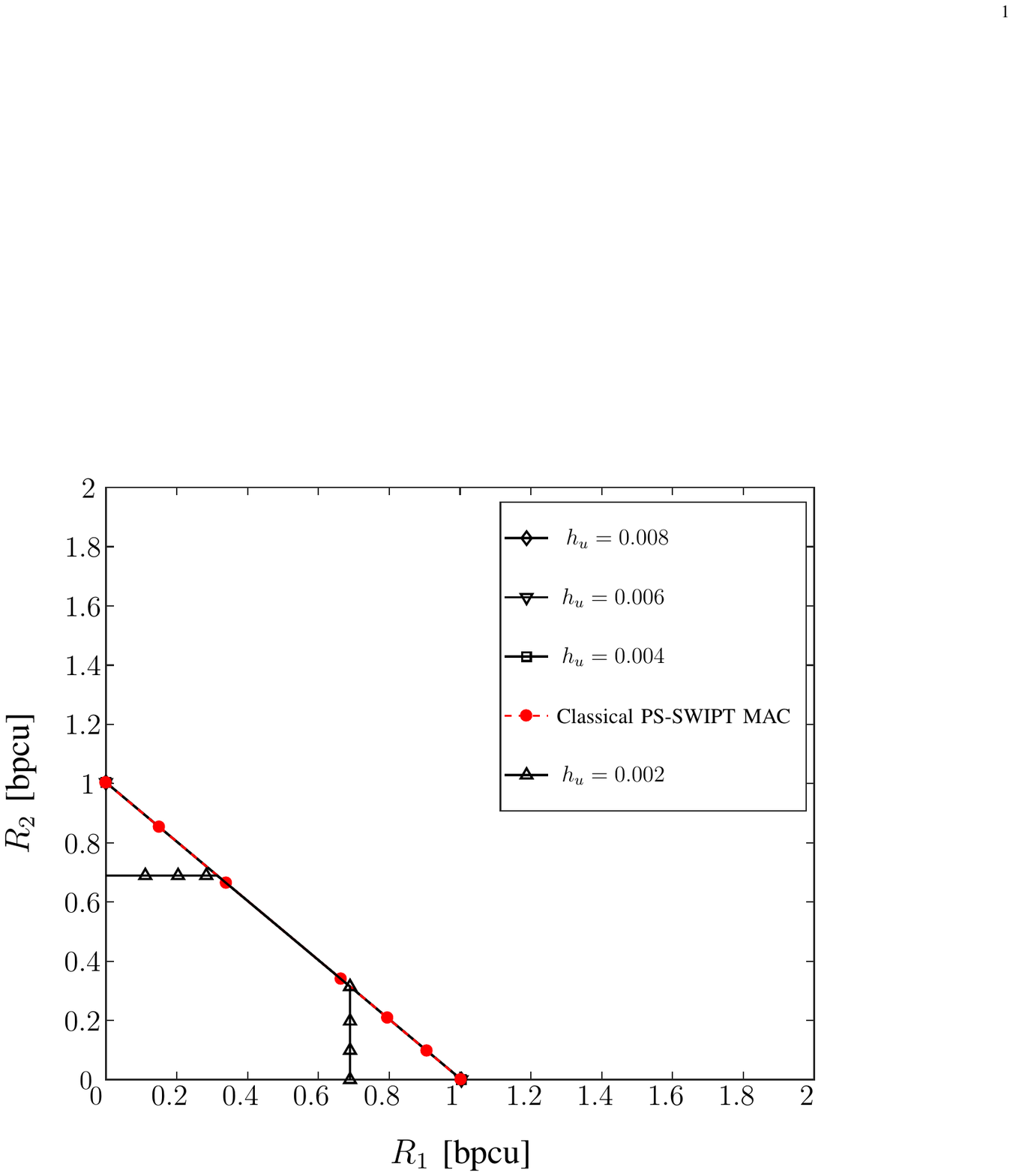}\label{mdrb_sdcoop_h21}}
  \caption{Maximum departure region comparison between Classical PS-SWIPT and PS-SWIPT MAC with cooperation for different values of user-user channel gain and decoding cost function  $\varphi(R)=\beta(2^{2R}-1)$.}
  \label{mdrb_sdcoop} 
\end{figure}

\section{CONCLUSIONS}
We studied two scenarios in the PS-SWIPT MAC wherein the PS scheme was employed at the destination while all the receiving nodes were subjected to decoding cost under non-linear EH model constraints. In the classical PS-SWIPT MAC scenario, we considered two different decoding schemes, i.e., simultaneous decoding and SIC. In each scheme, we initially calculated the maximum departure region; then, the optimal PS factors and the maximum sum rates were derived. The numerical results indicated that by changing the forms of decoding cost function, the performance of both SIC scheme and simultaneous decoding significantly vary especially in the case of constant decoding cost function. Moreover, we derived an analytical optimal solution to jointly optimizing PS factors and power allocation in order to maximize the sum rate in PS-SWIPT MAC with user cooperation. Numerically, the results showed that it is not always efficient to employ cooperation in the PS-SWIPT MAC which is a significant impact of decoding cost and rectifier saturation on the performance of the system.

\section*{APPENDIX A}
\textit{Sketch of proof for Theorem \ref{SIC_sumrate_Theorem}} : Firstly, the solution to $\mathbf{P} \mathbf{2}$ falls on the MDRB characterized by (\ref{simulregion1}) and (\ref{simulregion3}) or on it's Convex Hull Boundary (CHB). To further shrink the feasible set, we assume a pair of points $p^*=(R_1^*, R_2^*)$ and $\hat{p}=(\hat{R}_1, \hat{R}_2)$ on the MDRB characterized by (\ref{simulregion1}) and/or (\ref{simulregion3}). Then for any $0\leq\lambda\leq1$, the sum-rate of  any point on the line segment between $p^*$ and $\hat{p}$ is expressed as
\begin{equation}
\begin{aligned}
\hspace{0cm} 
\|\lambda p^*+(1-\lambda)\hat{p}\|_1&=\lambda R_1^*+ (1-\lambda)\hat{R}_1+\lambda R_2^*+ (1-\lambda)\hat{R}_2\\
&=\lambda(R_1^*+R_2^*)+(1-\lambda)(\hat{R}_1+\hat{R}_2)
\\&\leq max\big(R_1^*+R_2^*,\hat{R}_1+\hat{R}_2\big),
\end{aligned}
\end{equation}
where $\|.\|_1$ stands for $L_1$ norm, indicating that a convex hull does not increase the maximum sum-rate. As a result, regardless of the system's parameters and the form of the functions $\varphi(.)$ and $\psi(.)$, the optimal point in $\mathbf{P} \mathbf{2}$  falls on the MDRB expressed by  (\ref{simulregion1}) or (\ref{simulregion3}). Thus,  confining the feasibility set of $\mathbf{P} \mathbf{2}$ to the MDRB, we reformulate $\mathbf{P} \mathbf{2}$ as 
\begin{align*}
\hspace{-1.7cm}\mathbf{P} \mathbf{2-1}:\max _{{\rho}}~~ & R_1^*+R_2^* \\
\hspace{-0.5cm}\text { {s.t.} }\hspace{0cm}&[\rho,R_1^*,R_2^*]\subseteq(\ref{simulregion1}) \cup (\ref{simulregion3}).
\end{align*}
Since the feasible set of $\mathbf{P} \mathbf{2-1}$ consists of the union of the sets satisfying (\ref{simulregion1}) or (\ref{simulregion3}), we can separately maximize $R_1^*+R_2^*$ with the feasible sets (\ref{simulregion1}) and (\ref{simulregion3}), and then to find the maximum of their solutions. So, we rewrite  $\mathbf{P} \mathbf{2-1}$ as
\begin{align*}
\hspace{-1.7cm}\mathbf{P} \mathbf{2-2}:\max _{{\rho,\hat{\rho}}}\hspace{0cm} & ~~~\text{MAX}(R_1^*+R_2^*,\hat{R}_1+\hat{R}_2) \\
\hspace{-0.5cm}\text { {s.t.} }\hspace{0cm}&[\rho,R_1^*,R_2^*] \subseteq ~(\ref{simulregion1}), \\
\hspace{-0.5cm}\hspace{0cm}&[\hat{\rho},\hat{R}_1,\hat{R}_2] \subseteq ~(\ref{simulregion3}), 
\end{align*}where MAX(.) stands for the maximum value function. Then using the equations (\ref{simulregion1}) and (\ref{simulregion3}), $\mathbf{P} \mathbf{2-2}$ is rewritten as 
\begin{align*}
\hspace{-1.7cm}\mathbf{P} \mathbf{2-3}:\max _{{\rho,\hat{\rho}}}\hspace{0cm} & ~~~\text{MAX}(f_1(\rho),f_2(\hat{\rho})) \\
\hspace{-0.5cm}\text { {s.t.} }\hspace{0cm}&\rho_1^*\leq \rho\leq \rho_c^*, \\
\hspace{-0.5cm}\hspace{0cm}&\rho_2^*\leq \hat{\rho}\leq \rho_c^*, 
\end{align*}wherein the first and the second argument of MAX(.) are independent and non-convex. As a result, solving the derivative equations $f'_1(x)=0$ and $f'_2(x)=0$, Theorem  \ref{SIC_sumrate_Theorem} follows.

\section*{APPENDIX B}

\textit{Sketch of proof for Lemma \ref{SIC_linear_optimal}}: From Lemma \ref{SIC_sumrate_lemma}, (\ref{successphi}) is satisfied with equality. Now, we can add  $R_1$ and $R_2$ by increasing $\rho$ with a small enough value with no constraints violation. So, at least one of (\ref{successr1}) or (\ref{successr2}) holds with equality. First, suppose (\ref{successr2}) is satisfied with equality and (\ref{successr1}) holds with strict inequality. Then, we have 
\begin{equation}  \label{eq15}
R_1=\frac{1}{2}\log\Big(1+\dfrac{\eta\rho a}{\beta}- \dfrac{
	(1-\rho)|h_2|^2P_2}{(1-\rho)N+N_p}\Big).
\end{equation}
So, we have three conditions. First, we show that (\ref{eq15}) satisfies (\ref{successr1}), by a substituting 
and removing the logarithms as
\begin{equation}\label{eq16}
\dfrac{\eta\rho a}{\beta}-\dfrac{(1-\rho)|h_2|^2P_2}{N_p}\leq \dfrac{(1-\rho)|h_1|^2P_1}{(1-\rho)|h_2|^2P_2+N_p},
\end{equation}
 wherein $N_p\gg N$. Simplifying (\ref{eq16}), we derive an   equation as
\begin{equation}\label{eq17}
\begin{aligned}
&-B(\eta a N_p+\beta B)\rho^2 + \big(\eta a N_p(B+N_p)+\beta(2B^2
\\&+CN_p)\big)\rho - \beta (B^2+AN_p+BN_p)\leq 0,
\end{aligned}
\end{equation}
wherein $A$, $B$, and $C$ are defined in Lemma \ref{SIC_linear_optimal}. The left term of (\ref{eq17}) is a quadratic function of $\rho$. 
Since both addition and multiplication of the roots in the quadratic function are positive, the 
both roots are positive. Denoting the smaller root by $\rho_1$, the feasible  $\rho$ in (\ref{eq17}) falls within the interval $0 < \rho \leq \rho_1 < 1$, where
$$\rho_1=0.5\left(1+ \dfrac{\beta B^2+\beta C N_p+\eta a N_p^2-\Delta}{\beta B^2+\eta a N_p B}\right),$$
wherein  $\Delta$ is a real positive value defined in Lemma \ref{SIC_linear_optimal}.
The second condition is to have $\varphi(R_2)\leq P_{EH}$ by satisfying
$$ \dfrac{\beta(1-\rho)|h_2|^2P_2}{N_p}\leq \eta\rho a,$$
which gives the second feasibility interval $\rho \in [\rho_2, 1]$, where 
$$\rho_2=\dfrac{\beta B}{\beta B + \eta a N_p}.$$
As a result, noting that $\rho_2<\rho_1$, the feasible PS factor is bounded as $0 < \rho_2 \leq\rho \leq \rho_1 <1$.

Finally, we prove that the sum rate is increasing in $\rho$ in the feasible interval. From (\ref{eq15}), we have 
\begin{align}\label{eq19}
\begin{aligned}
R_1+R_2=&\frac{1}{2}\log\Big(1+\dfrac{\eta\rho a}{\beta}- \dfrac{
	(1-\rho)|h_2|^2P_2}{N_p}\Big) \\ &+\frac{1}{2}\log\Big(1+\dfrac{(1-\rho)|h_2|^2P_2}{N_p}\Big)
	\\&\hspace{-0.4cm}=\frac{1}{2}\log\Big(1+\dfrac{\eta\rho a}{\beta}+\dfrac{\eta a|h_2|^2P_2(1-\rho)}{\beta N_p}\\
&-\dfrac{(1-\rho)^2|h_2|^4 P_2^2 }{N_p^2} \Big).
\end{aligned}
\end{align}
Note that (\ref{eq19}) is increasing in  $\rho$ iff the argument of the logarithm function is increasing. Thus from (\ref{eq19}), $\dfrac{d\left(R_1+R_2\right)}{d\rho}\geq0$ is satisfied in the interval $0< \rho\leq \rho_c <1$, where
$$ \rho_c=0.5\left(1+\dfrac{\beta B^2+\eta a N_p^2 }{ (\beta B + \eta a N_p) B}\right).$$
Now, to prove that the sum rate is increasing in the interval $[\rho_2, \rho_1]$, we show  
 $\rho_1\leq\rho_c$ as
\begin{equation}\label{eq21}
\dfrac{\beta(B^2+CN_p)+\eta a N_p^2-\Delta}{B(\beta B+\eta a N_p)}\leq\dfrac{\beta B^2 +\eta a N_p^2 }{B(\beta B+\eta a N_p)},
\end{equation}
which is derived from  $\beta C N_p<\Delta$. 
 Hence, setting $\rho=$ $\rho_1$ is feasible and results in the maximum sum rate. In other words, (\ref{successr1}) is also satisfied with equality.
\bibliography{main}
\bibliographystyle{ieeetr}

\end{document}